\documentclass[a4paper,twocolumn,11pt,accepted=2025-12-01]{quantumarticle}
\pdfoutput=1

\usepackage[dvipsnames, table]{xcolor}
\usepackage[german,english]{babel}

\usepackage[utf8]{inputenc}

\usepackage[T1]{fontenc}
\usepackage{amsmath}
\usepackage[numbers,sort&compress]{natbib}
\usepackage{hyperref}
\usepackage{amsfonts}
\usepackage{amssymb}

\usepackage{tikz}
\usepackage{lipsum}
\usepackage{graphicx}
\usepackage{dcolumn}
\usepackage{bm}
\usepackage{physics}

\usepackage{subfigure}

\usepackage{tabularx}
\usepackage[normalem]{ulem}
\usepackage[linesnumbered, ruled]{algorithm2e}

\usepackage{colortbl}
\usepackage{diagbox}
\usepackage[version=4]{mhchem}

\usepackage[page]{appendix}

\usepackage{theorem}
\newtheorem{theorem}{Theorem}
\newtheorem{definition}{Definition}
\newtheorem{observation}{Observation}

\newtheorem{proposition}{Proposition}

\definecolor{myrefcolor}{rgb}{0.067,0.5,0.5}

\newenvironment{proof}{\noindent \textbf{{Proof~} }}{\hfill $\blacksquare$}



\setcounter{topnumber}{5}
\setcounter{bottomnumber}{5}
\setcounter{totalnumber}{10}

\newcommand{\sgn}{\operatorname{sgn}}

\begin{document}

\title{Exploiting many-body localization for scalable variational quantum simulation}

\author{Chenfeng Cao}
\orcid{0000-0001-5589-7503}
\email{chenfeng.cao@connect.ust.hk}
\affiliation{HK Institute of Quantum Science $\&$ Technology, The University of Hong Kong, Hong Kong, China}
\affiliation{Dahlem Center for Complex Quantum Systems, Freie Universität Berlin, 14195 Berlin, Germany}

\author{Yeqing Zhou}
\affiliation{Department of Physics, University of Wisconsin-Madison, Madison, WI 53706, USA}

\author{Swamit Tannu}
\affiliation{Department of Computer Science, University of Wisconsin-Madison, Madison, WI 53706, USA}

\author{Nic Shannon}
\affiliation{Theory of Quantum Matter Unit, Okinawa Institute of Science and Technology Graduate University, Onna-son, Okinawa 904-0412, Japan}

\author{Robert Joynt}
\affiliation{Department of Physics, University of Wisconsin-Madison, Madison, WI 53706, USA}

\affiliation{Theoretical Sciences Visiting Program (TSVP), Okinawa Institute of Science and Technology Graduate University, Onna, 904-0495, Japan}

\maketitle

\begin{abstract}
Variational quantum algorithms (VQAs) represent a promising pathway toward achieving practical quantum advantage on near-term hardware. Despite this promise, for generic, expressive ansätze, their scalability is critically hindered by barren plateaus—regimes of exponentially vanishing gradients. We demonstrate that initializing a hardware-efficient, Floquet-structured ansatz within the many-body localized (MBL) phase mitigates barren plateaus and enhances algorithmic trainability. Through analysis of the inverse participation ratio, entanglement entropy, and a novel low-weight stabilizer Rényi entropy, we characterize a distinct MBL--thermalization transition. Below a critical kick strength, the circuit avoids forming a unitary 2-design, exhibits robust area-law entanglement, and maintains non-vanishing gradients. Leveraging this MBL regime facilitates the efficient variational preparation of ground states for several model Hamiltonians with significantly reduced computational resources. Crucially, experiments on a 127-qubit superconducting processor provide evidence for the preservation of trainable gradients in the MBL phase for a kicked Heisenberg chain, validating our approach on contemporary noisy hardware. Our findings position MBL-based initialization as a viable strategy for developing scalable VQAs and motivate broader integration of localization into quantum algorithm design.
\end{abstract}

\section{Introduction}

Quantum computing promises to solve classically intractable problems in cryptography, optimization, and quantum chemistry. A key near-term approach is the variational quantum algorithm (VQA), a hybrid framework using parameterized quantum circuits and classical optimizers to mitigate hardware limits like coherence times and connectivity, tackling tasks like ground-state preparation via the Variational Quantum Eigensolver (VQE) for molecules~\cite{Peruzzo2014Quantum, Wecker2015Progress} or combinatorial optimization with the Quantum Approximate Optimization Algorithm (QAOA)~\cite{farhi2014quantum, Zhou2020Quantum}. Nevertheless, scaling VQAs faces major challenges, most notably barren plateaus—exponential gradient decay with system size~\cite{McClean2018Barren, Marrero2021Entanglement, Larocca2022diagnosingbarren, CerveroMartin2023barrenplateausin, larocca2024review}. These arise from various sources such as high expressivity yielding unitary 2-designs~\cite{McClean2018Barren} and volume-law entanglement~\cite{Marrero2021Entanglement}, unified in a Lie-algebraic framework~\cite{Ragone2024Lie, Fontana2024Characterizing}. Trainability is further challenged by spurious local minima~\cite{Anschuetz2022Quantum, Anschuetz2025Unified, Cao2025Unveiling} and noise effects~\cite{Mele2024Noise, Wang2021Noise, Quek2024Exponentially, Stilck2021Limitations, Cerezo2022Challenges}.

Entanglement's role is paradoxical: vital for quantum advantage, yet excessive amounts cause issues, such as inefficient ``CQ--universality'' (classical‑input/quantum‑output) in measurement-based quantum computing~\cite{Gross2009Most, Briegel2009Measurement} and barren plateaus in VQAs that hinder gradient-based and gradient-free optimizers through equivalent cost concentration~\cite{Arrasmith2021Effect, Arrasmith2022Equivalence}. Mitigation strategies target circuit and cost function structure constraints~\cite{Cerezo2021Cost, zhang2024absence, Wiersema2020Exploring, Pesah2021Absence, Zhao2021Analyzing, Park2024Hamiltonian, Wang2024Entanglement, Roeland2023Measurement}, altering the training protocol~\cite{Mele2022Avoiding, Sack2022Avoiding, Puig2025Variational}, and employing specialized initializations~\cite{Patti2021Entanglement, Wang2024Trainability, Park2024Hamiltonian}. However, it is worth noting that despite these efforts, the absence of barren plateaus alone does not guarantee a quantum advantage~\cite{Cerezo2023Does}. Recent analyses highlight that successful training may still require increasingly precise initializations~\cite{Mhiri2025Unifying} and that even trainable regions might be classically surrogatable with only polynomial overhead (after potentially identifying necessary basis with quantum computers)~\cite{Lerch2024Efficient}.

We address these issues using \textit{many-body localization} (MBL), a phenomenon in disordered quantum systems featuring a thermal-to-localized transition with increasing disorder~\cite{Anderson1958Absence, Basko2006Metal, Nandkishore2015Many, Abanin2019Colloquium, sierant2024manybody}. Thermal phases show volume-law entanglement entropy~\cite{Eisert2010Colloquium}, while MBL phases exhibit area-law scaling, preserving localization. MBL's local integrals of motion and decoherence resilience~\cite{Abanin2019Colloquium} have spurred quantum computing applications, including annealing acceleration~\cite{Cao2021Speedup}, VQE probing~\cite{Liu2023Probing}, analog expressibility~\cite{Tangpanitanon2020Expressibility}, Floquet Trotterization~\cite{Su2022Signatures, Zhu2021Probing}, and experimental integral extraction~\cite{shtanko2023uncovering}.

Recent works have also linked MBL to improved VQA trainability, in both digital hardware-efficient circuits~\cite{park2024hardwareefficient} and analog VQAs where phases of matter are closely tied to the flatness of the loss landscape~\cite{Srimahajariyapong2025Connecting}. In contrast, our work uses a distinct circuit architecture and trial-state construction, provides a $t$-design-based analytical explanation, and includes an experimental demonstration on a 127-qubit processor.

Inspired by Floquet MBL dynamics~\cite{Ponte2015Many-Body, Lazarides2015Fate, Abanin2016Theory}, we develop a Floquet-initialized, hardware-efficient, universal variational circuit exhibiting an MBL--thermalization transition~\cite{Ponte2015Many-Body, Lazarides2015Fate}. By initializing the circuit within the MBL phase—characterized by area-law entanglement and robust gradients—our protocol systematically avoids barren plateaus. The optimization begins from a low-complexity trial state and, while expressibility is unrestricted during training, our data suggest that for the Hamiltonians and system sizes studied, optimization trajectories remain in trainable regions, far from the 2-design limit that causes plateaus.

We validate our proposal with tools linking $t$-designs to MBL: a relation via inverse participation ratio (Theorem~\ref{theorem:IPR}) and a new low-weight stabilizer Rényi entropy $M_{t,k}$ correlating with higher-order ($2t$-) designs (Theorem~\ref{theorem:SE}). These results provide insight into circuit randomness, and under plausible assumptions,  provide evidence for barren-plateau absence in MBL (Observation~\ref{Claim:violatingWBP}, Theorem~\ref{theorem:grad_bound}). On the experimental side, we implement our protocol on the 127-qubit \texttt{ibm\_brisbane} processor, successfully demonstrating gradient restoration for a kicked Heisenberg model of up to 31 qubits.

The paper is organized as follows: Section~\ref{Sec:Setup} details Floquet-initialized circuits; Section~\ref{Sec:MBLTransition} examines the MBL--thermalization transition; Section~\ref{Sec:Barrenplateaus} discusses trainability and Aubry-André applications; Section~\ref{Sec:IBMExperiment} presents experimental results; Section~\ref{Sec:Conclusions} summarizes findings and future directions. In Appendices, we provide proofs of universality, theorems, and additional analyses.

\section{Methods}~\label{Sec:Setup}

Our approach employs an $n$-qubit hardware-efficient layered ansatz. As depicted in Fig.~\ref{fig:schematic}(a), each of the $D$ layers consists of a fixed sequence of blocks containing single-qubit and two-qubit Pauli rotations. This ansatz is universal, capable of approximating any unitary transformation with sufficient depth (see Appendix~\ref{obproof:universality} for a constructive proof).

The variational optimization begins from a low-complexity, classically obtained trial state 
$|\psi_t\rangle$ (e.g., a Hartree–Fock Slater determinant or a compact matrix-product state~\cite{hartree_1928_1, Verstraete2008Matrix, Schollwock2011The}). 
The construction of $|\psi_t\rangle$ aims to capture essential features of the target ground state's entanglement structure. 
Let $\rho_t := |\psi_t\rangle\langle\psi_t|$. 
We then apply a low-depth quantum channel $\mathcal E_t$ that maps this trial state to a simple reference state in the computational basis, 
$\rho_c := \mathcal E_t(\rho_t) = |\varphi_c\rangle\langle\varphi_c|$ (e.g., $|\uparrow\rangle^{\otimes n}$). 
The main variational circuit $\hat{U}(\boldsymbol{\vartheta})$ is then applied to $|\varphi_c\rangle$, and a decoding channel $\mathcal E_t^{-1}$ reverses the encoding defined by $\mathcal E_t$ on the relevant family of states explored. When viewed in the logical frame defined by $\mathcal E_t$, the evolution is effectively conjugated by $\mathcal E_t$ and $\mathcal E_t^{-1}$. This ensures that the evolution remains quasi-local with respect to the degrees of freedom already localized in $|\psi_t\rangle$, thereby preserving MBL-like properties at initialization. 
Ultimately, our method should be viewed as the second step in an optimization scheme in which the first step is the best possible classical computation.

\begin{figure}[t]
\centering
\includegraphics[width=8cm]{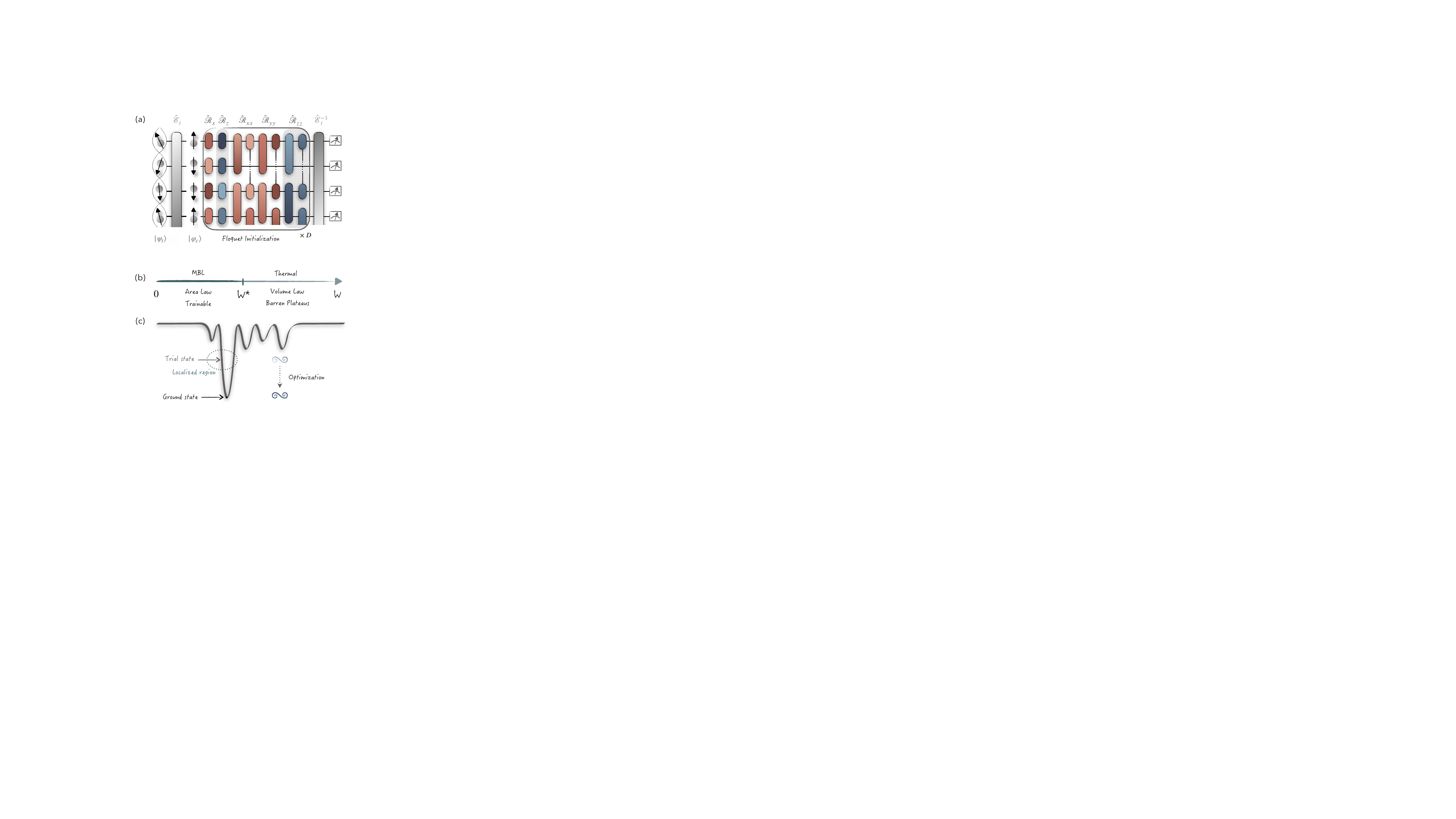}
\caption{Schematic representation of the Floquet-initialized variational quantum circuit, the localization-thermalization phase transition, and the optimization landscape. \textbf{(a)} Diagram of a variational quantum circuit. The illustration includes four qubits as a segment of a larger circuit, resulting in some 2-qubit gates acting on adjacent pairs $Q_i-Q_j$ appearing truncated. The circuit begins with a low-complexity trial state. Steady rotation angles within $\hat{\mathcal{R}}_z$ and $\hat{\mathcal{R}}_{zz}$ are initialized randomly within $[-\pi,\pi)$, while the kick rotation angles in $\hat{\mathcal{R}}_x$, $\hat{\mathcal{R}}_{xx}$, and $\hat{\mathcal{R}}_{yy}$ are initialized within $[-W,W]$. \textbf{(b)} Transition of the variational quantum circuit between the many-body localized (MBL) phase and the thermal phase as a function of the kick strength $W$. Below a critical threshold $W^\star$, the circuit remains in the MBL phase, characterized by output state entanglement entropy adhering to the area law and trainable circuit parameters. Above $W^\star$, the circuit enters the thermal phase, where the output state's entanglement entropy follows the volume law, rendering the circuit parameters untrainable. \textbf{(c)} Depiction of the optimization landscape for an MBL-initialized circuit. The initial output state is localized to a trial state, ideally positioned in the same ``valley'' as the ground state, facilitating effective optimization.}
\label{fig:schematic}
\end{figure}

The main variational circuit contains parameters that implement the Floquet initialization:
\begin{equation}\label{eq:1}
\hat{U}(\boldsymbol{\vartheta}) = \prod_{\ell=1}^D \bigl(\prod_{\substack{\alpha\in\\\{x,y,z\}}}\hat{\mathcal{R}}_{\alpha\alpha}(\boldsymbol{\vartheta}^{(\ell)}_{\alpha\alpha})\bigr)\bigl(\prod_{\alpha \in \{x,z\}}\hat{\mathcal{R}}_{\alpha}(\boldsymbol{\vartheta}^{(\ell)}_{\alpha})\bigr).
\end{equation}
Here, $D$ denotes the circuit depth, measured by the number of layers. For Pauli index $\alpha \in \{x, y, z\}$, we define:
\begin{equation}\label{eq:2}
\begin{aligned}
&\hat{\mathcal{R}}_{\alpha}(\boldsymbol{\vartheta}^{(\ell)}_{\alpha}) = \prod_{j=1}^n\exp\bigl(-i\frac{\vartheta^{(\ell j)}_{\alpha}}{2}\hat{\sigma}^\alpha_{j}\bigr),\\
&\hat{\mathcal{R}}_{\alpha\alpha}(\boldsymbol{\vartheta}^{(\ell)}_{\alpha\alpha}) = \prod_{\substack{(j, k) \in \mathrm{edges}(G)}}\exp\bigl(-i\frac{\vartheta^{(\ell jk)}_{\alpha\alpha}}{2}\hat{\sigma}^{\alpha}_{j}\hat{\sigma}^{\alpha}_{k}\bigr),
\end{aligned}
\end{equation}
where $\boldsymbol{\vartheta}_{\alpha}^{(\ell)}$ represents the rotation angles for single-qubit $\alpha$-Pauli rotations in the $\ell$-th layer, and $\boldsymbol{\vartheta}^{(\ell)}_{\alpha\alpha}$ denotes those for two-qubit $\alpha\alpha$-Pauli rotations in the same layer. The graph $G$ specifies the qubit connectivity for two-qubit gates.

The Floquet initialization begins by uniformly sampling first-layer \textit{steady parameters} $\vartheta^{(1j)}_{z}$ and $\vartheta^{(1jk)}_{zz}$ from $[-\pi,\pi)$, and \textit{kick parameters} $\vartheta^{(1j)}_{x}$, $\vartheta^{(1jk)}_{xx}$, $\vartheta^{(1jk)}_{yy}$ from $[-W, W]$, where $W$ is the kick strength. These are replicated across all layers $\ell=1,2,\dots,D$:
\begin{equation}\label{eq:3}
\boldsymbol{\vartheta}_\alpha^{(\ell)}      = \boldsymbol{\vartheta}_\alpha^{(1)},\quad
\boldsymbol{\vartheta}_{\alpha\alpha}^{(\ell)} = \boldsymbol{\vartheta}_{\alpha\alpha}^{(1)},
\quad \forall\, \ell \in \{1,\dots,D\}.
\end{equation}
yielding an initial parameter instance $\boldsymbol{\vartheta}_{\text{init}}$ drawn from the periodic ensemble $\Theta_\text{prd}(W)$.

This Floquet initialization corresponds to a digital simulation of a periodically driven many-body system. 
At initialization (when $\boldsymbol{\vartheta} \in \Theta_{\mathrm{prd}}(W)$), all layers share identical parameters, so the single-layer unitary $\hat{U}_1$ acts as the Floquet operator and the initialized circuit is $\hat{U}_1^{D}$. 
Since gates within each block commute [Eq.~\eqref{eq:2}], each block is the exact exponential of a (few-body) Hamiltonian term. 
Defining, for $\alpha\in\{x,z\}$,
\begin{equation}
\hat H_{\alpha} = \frac12 \sum_{j} \vartheta^{(1j)}_{\alpha}\hat\sigma^\alpha_{j},
\end{equation}
and for $\alpha\in\{x,y,z\}$,
\begin{equation}
\hat H_{\alpha\alpha} = \frac12 \sum_{\substack{(j, k) \in \mathrm{edges}(G)}} \vartheta^{(1jk)}_{\alpha\alpha}\hat\sigma^\alpha_{j}\hat\sigma^\alpha_{k},
\end{equation}
the Floquet operator (up to the chosen block ordering within a layer) can be written as
\begin{equation}
\hat{U}_1
=
e^{-i\hat H_{zz}}
e^{-i\hat H_{yy}}
e^{-i\hat H_{xx}}
e^{-i\hat H_{z}}
e^{-i\hat H_{x}}.
\end{equation}
One convenient representation is a piecewise-constant drive over one period $T$,
e.g. $\hat H(t)=\hat H_x$ for $t\in[0,1)$, $\hat H_z$ for $t\in[1,2)$, $\hat H_{xx}$ for $t\in[2,3)$,
$\hat H_{yy}$ for $t\in[3,4)$, and $\hat H_{zz}$ for $t\in[4,5)$ (and periodic), which yields $\hat U_1=\mathcal T e^{-i\int_0^T dt \hat H(t)}$, with $\mathcal T$ denoting time ordering. The steady angles are sampled uniformly from $[-\pi,\pi)$, which provides strong quenched disorder; Floquet-MBL is known to persist in such disordered interacting drives for sufficiently weak driving~\cite{Ponte2015Many-Body,Lazarides2015Fate,Abanin2016Theory,Abanin2019Colloquium}, and has been demonstrated numerically in related disordered Floquet spin-chain models~\cite{Zhang2016A}. 
In our parametrization, $W$ controls the typical magnitude of the kick generators: small $W$ yields localized (Floquet-MBL) dynamics dominated by the steady disorder, while large $W$ induces thermalization.

A subtlety specific to periodically driven systems is the distinction between genuine Floquet-MBL (no heating even at asymptotically long times) and Floquet prethermal regimes, where heating is parametrically slow and the dynamics is well captured by an effective quasi-conserved Hamiltonian over exponentially long times before eventual heating sets in~\cite{Mori2016Rigorous, Abanin2017Effective, Mori2018Thermalization}. In the near-term variational setting considered here, however, this distinction is largely immaterial: our circuits have finite depth and run on noisy hardware with finite coherence, so what matters is simply the existence of a long-lived non-ergodic window over the explored depths, during which entanglement remains low and the circuit stays far from unitary-design behavior. Accordingly, we use the term ``MBL phase'' operationally to denote this trainable non-ergodic regime.

This Floquet initialization is fundamentally distinct from close-to-identity schemes~\cite{Zhang2022Escaping, Wang2024Trainability, Park2024Hamiltonian}, where layer parameters are typically drawn independently near the identity. Here, strict layer replication ($\hat{U}_{\ell}=\hat{U}_1$ at initialization) is essential: it realizes stroboscopic evolution under a fixed disordered Floquet drive, as required in Floquet-MBL constructions.

For each layer unitary $\hat U_\ell$ we denote its induced channel by $\mathcal U_\ell(\rho):=\hat U_\ell\rho \hat U_\ell^\dagger$. It is convenient to describe the dynamics in the logical frame defined by the encoding channel $\mathcal E_t$, in which each physical layer $\mathcal U_\ell$ appears conjugated as $\mathcal E_t^{-1}\circ\mathcal U_\ell\circ\mathcal E_t$. 
In this frame, the initialized circuit acts as
\begin{equation}
\begin{aligned}
\rho(\boldsymbol{\vartheta}_{\mathrm{init}})
&=\bigl(\mathcal E_{t}^{-1} \circ \mathcal U_{D} \circ\mathcal E_{t}\bigr)
\cdots
\bigl(\mathcal E_{t}^{-1}\circ \mathcal U_{1}\circ \mathcal E_{t}\bigr)
(\rho_{t}) \\
&=\bigl(\mathcal E_{t}^{-1}\circ\mathcal U_{1}\circ \mathcal E_{t}\bigr)^{\circ D}
(\rho_{t}) \\ &=(\mathcal E_{t}^{-1}\circ \mathcal U^{\circ D}_{1}\circ \mathcal E_{t})
(\rho_{t})
\end{aligned}
\label{eq:effective-Floquet-layer}
\end{equation}
Each ``effective'' layer $(\mathcal E_{t}^{-1}\circ \mathcal U_{\ell}\circ \mathcal E_{t})$ acts in the logical basis where $|\psi_{t}\rangle$ is a simple product state; hence its generators are quasi-local in the $\mathcal E_t$-defined logical basis (i.e., with respect to the degrees of freedom captured by the trial state). In the sense of circuit expressibility defined in Refs.~\cite{Larocca2022diagnosingbarren, Holmes2022Expressibility}, Floquet replication greatly restricts the effective exploration of parameter space compared to fully random layer-wise parametrizations, leading to a slower increase in expressibility than in fully random circuits and keeping it, for the system sizes studied, below the unitary-2-design threshold.

During optimization, rotation angles in different layers are updated independently to minimize the output energy. With sufficient depth and suitable parameters, the variational quantum circuit can in principle prepare the exact ground state of the target Hamiltonian. While the output state is not confined to the vicinity of $|\psi_t\rangle$ during optimization—indicating release from strict MBL constraints—numerical evidence in Sec.~\ref{Sec:Barrenplateaus} suggests that MBL-based initialization keeps the optimization trajectories far from the Haar regime for the system sizes and models studied, and we did not observe barren-plateau behavior along these trajectories. A more systematic validation across broader settings and larger system sizes is left to future work. In principle, $\mathcal E_t$ could be a general (possibly non-unitary) encoding channel. In this work, for implementability we take $\mathcal E_t(\rho)=U_t\rho U_t^\dagger$ with a shallow unitary $U_t$, so that $\mathcal E_t^{-1}(\rho)=U_t^\dagger \rho U_t$.

\section{MBL--Thermalization Transition}\label{Sec:MBLTransition}
Many-body localization (MBL) arises in disordered quantum systems with interacting constituents, where disorder can trigger a transition from ergodic to non-ergodic behavior~\cite{Anderson1958Absence, Basko2006Metal, Nandkishore2015Many, Abanin2019Colloquium}. Consequently the system fails to thermalize and retains memory of its initial conditions. In this section, we analyze the MBL--thermalization phase transition in our setup using three diagnostics: inverse participation ratio (IPR), entanglement entropy, and a low‑weight stabilizer Rényi entropy (SRE).

We briefly recall the concept of quantum $t$-designs, critical for analyzing random quantum states and channels~\cite{Renes2004Symmetric, Gross2007Evenly, Dankert2009Exact}. An ensemble of parameterized unitary operations $\hat{U}(\boldsymbol{\vartheta}) $, with $\boldsymbol{\vartheta}$ uniformly sampled from the parameter space $\Theta$, forms a $t$-design if and only if it satisfies
\begin{equation}
    \int_{\Theta} (\hat{U}(\boldsymbol{\vartheta}) \otimes \hat{U}(\boldsymbol{\vartheta})^\dagger)^{\otimes t} d\boldsymbol{\vartheta} = \int_{\mathrm{U}(2^n)} (\hat{U} \otimes \hat{U}^\dagger)^{\otimes t} d\mu_{\mathrm H}(\hat{U}),
\end{equation}
where $\mathrm{U}(2^n)$ denotes the unitary group and
$d\mu_{\mathrm H}(\hat{U})$ is the Haar measure, i.e. the unique left- and right-invariant probability measure on $\mathrm{U}(2^n)$ that serves as the “uniform” distribution over all unitaries~\cite{Watrous2018Theory, Mele2024Introduction}.

Within the context of variational quantum optimization, if the circuit ensemble forms a unitary 2-design then cost gradients concentrate exponentially in the system size, independently of the cost locality~\cite{McClean2018Barren, Cerezo2021Cost}. While real-valued, time-reversal-symmetric circuits are expected to thermalize to the circular orthogonal ensemble (COE)~\cite{Collins2006Integration, Alessio2014Long-time}, our hardware-efficient ansatz employs non-commuting generators that span $\mathfrak{su}(2^n)$. This structure generally breaks time-reversal symmetry, meaning the unitary cannot be represented solely by a real matrix. Consequently the thermal phase is governed by the circular unitary ensemble (CUE).  A level-statistics check (Appendix~\ref{Appendix:CUE}) shows the disorder-averaged adjacent-gap ratio approaching the CUE benchmark $\langle r\rangle\simeq 0.603$ (as opposed to $\simeq 0.536$ for COE and $\simeq 0.386$ for Poisson), justifying the use of Haar–$\mathrm U(2^n)$ moments in our $t$-design benchmarks.

\vspace{1em}

\textit{Inverse Participation Ratio}-- The inverse participation ratio (IPR) is a valuable metric for characterizing the localization properties of quantum states. For an $n$-qubit state $|\psi\rangle$, the $t$-th-order IPR with respect to the basis states $\{|\beta_j\rangle\}$ is defined as
\begin{equation}
    \text{IPR}_t(|\psi \rangle) = \sum_{j} |\langle \beta_j |\psi \rangle |^{2t}.
\end{equation}
A higher $\text{IPR}_t$ value indicates a more localized state, while a lower value indicates greater delocalization. As $t$ increases, the IPR becomes increasingly sensitive to localization properties. In MBL studies, $\text{IPR}_2$ is commonly used to differentiate between localized and delocalized eigenstates~\cite{Luca2013Ergodicity, Luitz2015Universal}. Here, we select basis states $\{|\beta_j\rangle = \hat{U}_t^\dagger |j\rangle\}$ with $\{|j\rangle\}_{j=1,2,\dots,2^n}$ representing the computational basis. The trial state $|\psi_t\rangle$ belongs to $\{|\beta_j\rangle\}$. The $\text{IPR}_t$ measures how well the output state is localized to the trial state, noting that IPR's definition is basis-dependent.

There is a profound link between the inverse participation ratio and the $t$-design characteristic:
\begin{theorem}\label{theorem:IPR}
    Assume that $\{\hat{U}(\boldsymbol{\vartheta}) \}$, with $\boldsymbol{\vartheta}$ uniformly sampled from $\Theta$, forms a unitary $t$-design. For any arbitrary $n$-qubit input state $|\psi_{\text{in}}\rangle$, the expected value of the inverse participation ratio of the output state $|\psi(\boldsymbol{\vartheta})\rangle = \hat{U}(\boldsymbol{\vartheta}) |\psi_{\text{in}}\rangle$ with respect to some basis states $\{|\beta_j\rangle\}$ satisfies
    \begin{equation}
        \int_{\Theta} \text{IPR}_t(|\psi(\boldsymbol{\vartheta})\rangle)d\boldsymbol{\vartheta} = \frac{2^{n}!\, t!}{(t+2^n-1)!}.
    \end{equation}
For $t=2$, specifically:
\begin{equation}
    \int_{\Theta} \text{IPR}_2(|\psi(\boldsymbol{\vartheta})\rangle)d\boldsymbol{\vartheta} = \frac{2}{2^n+1}.
    \label{eq:IPR_2}
\end{equation}
\end{theorem}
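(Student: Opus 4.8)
The plan is to reduce the $\Theta$-average of $\mathrm{IPR}_t$ to a single Haar moment on $\mathrm{U}(2^n)$ and then evaluate that moment via the symmetric-subspace projector. Writing $N:=2^n$ and $|\psi(\boldsymbol{\vartheta})\rangle=\hat U(\boldsymbol{\vartheta})|\psi_{\mathrm{in}}\rangle$, linearity of the integral gives
\[
\int_{\Theta}\mathrm{IPR}_t(|\psi(\boldsymbol{\vartheta})\rangle)\,d\boldsymbol{\vartheta}
=\sum_{j=1}^{N}\int_{\Theta}\bigl|\langle\beta_j|\hat U(\boldsymbol{\vartheta})|\psi_{\mathrm{in}}\rangle\bigr|^{2t}\,d\boldsymbol{\vartheta}.
\]
The key observation is that each integrand factorizes as $\bigl(\langle\beta_j|\hat U|\psi_{\mathrm{in}}\rangle\bigr)^{t}\bigl(\langle\psi_{\mathrm{in}}|\hat U^\dagger|\beta_j\rangle\bigr)^{t}$, a balanced polynomial of degree $t$ in the entries of $\hat U$ and degree $t$ in those of $\hat U^\dagger$. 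This is precisely the kind of moment a unitary $t$-design reproduces, so by the defining property of the design I may replace each $\Theta$-integral by the Haar integral over $\mathrm{U}(N)$.

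Next I would evaluate the Haar moment by lifting to the $t$-fold tensor power. Using
\[
\bigl|\langle\beta_j|\hat U|\psi_{\mathrm{in}}\rangle\bigr|^{2t}
=\langle\beta_j^{\otimes t}|\hat U^{\otimes t}|\psi_{\mathrm{in}}^{\otimes t}\rangle
\langle\psi_{\mathrm{in}}^{\otimes t}|(\hat U^\dagger)^{\otimes t}|\beta_j^{\otimes t}\rangle,
\]
the relevant object is the twirl $\int_{\mathrm{U}(N)}\hat U^{\otimes t}|\psi_{\mathrm{in}}^{\otimes t}\rangle\langle\psi_{\mathrm{in}}^{\otimes t}|(\hat U^\dagger)^{\otimes t}\,d\mu_{\mathrm H}$. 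Since $|\psi_{\mathrm{in}}^{\otimes t}\rangle$ lies in the symmetric subspace $\Sym^t(\mathbb{C}^N)$, on which $\hat U^{\otimes t}$ acts irreducibly, Schur's lemma forces this twirl to equal the normalized projector $P_{\sym}/\dim\Sym^t(\mathbb{C}^N)$, with $\dim\Sym^t(\mathbb{C}^N)=\binom{N+t-1}{t}$.

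Because each $|\beta_j^{\otimes t}\rangle$ already lies in $\Sym^t(\mathbb{C}^N)$, one has $\langle\beta_j^{\otimes t}|P_{\sym}|\beta_j^{\otimes t}\rangle=1$, so every term equals $1/\binom{N+t-1}{t}=t!\,(N-1)!/(N+t-1)!$, independently of $j$ and of the input $|\psi_{\mathrm{in}}\rangle$. Summing the $N$ identical contributions gives
\[
\int_{\Theta}\mathrm{IPR}_t\,d\boldsymbol{\vartheta}
=N\cdot\frac{t!\,(N-1)!}{(N+t-1)!}
=\frac{N!\,t!}{(N+t-1)!},
\]
which is the claimed identity with $N=2^n$; setting $t=2$ collapses it to $2/(2^n+1)$. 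The only genuinely delicate step is the Schur's-lemma evaluation of the twirl: I must confirm that $|\psi_{\mathrm{in}}^{\otimes t}\rangle\langle\psi_{\mathrm{in}}^{\otimes t}|$ is supported entirely on $\Sym^t(\mathbb{C}^N)$ and has unit trace there, so that the normalization is fixed purely by $\dim\Sym^t(\mathbb{C}^N)$. A slightly more elementary alternative, avoiding representation theory, is to note that for Haar-random $\hat U$ the vector $\hat U|\psi_{\mathrm{in}}\rangle$ is uniform on the unit sphere, whence each $|\langle\beta_j|\hat U|\psi_{\mathrm{in}}\rangle|^2$ is $\mathrm{Beta}(1,N-1)$-distributed with $t$-th moment $t!\,(N-1)!/(N+t-1)!$; both routes give the same result.
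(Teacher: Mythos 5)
Your proposal is correct and follows essentially the same route as the paper: both reduce the $\Theta$-average to a Haar moment via the $t$-design property, lift to the $t$-fold tensor power, identify the twirl of $|\psi_{\mathrm{in}}\rangle\langle\psi_{\mathrm{in}}|^{\otimes t}$ with the normalized symmetric-subspace projector, and use $\langle\beta_j|^{\otimes t}\Pi_{\mathrm{sym}}|\beta_j\rangle^{\otimes t}=1$ before summing the $2^n$ identical terms. The Beta-distribution remark is a valid alternative but not needed.
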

A proof is available in Appendix~\ref{theoremproof:IPR}.

This result indicates that the inverse participation ratio not only quantifies localization but also provides a diagnostic for whether an ensemble of unitaries violates unitary $t$-design behavior. However, unlike the frame potential defined in Refs.~\cite{Renes2004Symmetric, sim2019expressibility}, $\text{IPR}_t$ is not suitable for quantifying the deviation of $\{|\psi(\boldsymbol{\vartheta})\rangle\}$ from $t$-designs with $t \geq 2$, as the basis states $\{|\beta_j\rangle\}$ do not form a state $2$-design. Instead, when $t\geq 2$, the dimension of the subspace spanned by the $t$-fold basis states is considerably smaller than the dimension of the $t$-fold symmetric subspace $\vee^t \mathbb{C}^{2^n}$~\cite{Harrow2013Church}:
\begin{equation}
\begin{aligned}
    \text{dim}\big(\text{Span}\{|j\rangle^{\otimes t}\}\big) &= 2^n \\&\ll \dim \bigl(\vee^{t} \mathbb C^{2^{n}}\bigr) \\&=\binom{t+2^n-1}{t}.
\end{aligned}
\end{equation}
Despite these dimensional differences, we confirm that in the thermal phase, the output state ensemble qualifies as a state 2-design. This confirmation comes from our direct calculations of frame potentials, detailed in Appendix~\ref{Sec: Frame Potential}. Additionally, we explore the expressibility of the Floquet initialization for both the MBL and thermal phases in the same appendix, providing insights into the dynamic behaviors of these phases.

\vspace{1em}

\textit{Entanglement Entropy}-- A pivotal characteristic of MBL is the behavior of the von Neumann entanglement entropy ($S_L$) of a subsystem $L$:
\begin{equation}
S_L = -\operatorname{Tr}(\rho_L \log \rho_L)
\end{equation}
where $\rho_L$ is the reduced density matrix for subsystem $L$. In thermalized systems, $S_L$ adheres to a volume law scaling, proportional to subsystem size: $S_L \sim |L|^d$, where $d$ denotes the spatial dimension of the system. In contrast, MBL systems exhibit an area law scaling, in which $S_L$ increases proportionally to the boundary of the subsystem $L$: $S_L \sim |L|^{d-1}$~\cite{Eisert2010Colloquium}. For one-dimensional systems, where $d=1$, $S_L$ essentially remains constant. The transition from volume law to area law scaling is a defining characteristic of MBL systems and has been extensively studied through numerical simulations~\cite{Pal2010Many, Luitz2015Universal}.

\vspace{1em}

\textit{Low-Weight Stabilizer Rényi Entropy--} The final metric we employ is a restricted version of the stabilizer Rényi entropy, as defined in Ref.~\cite{Leone2022Stabilizer}, which aims to quantify the magic resource (non-stabilizerness) of a quantum state. This metric provides valuable insights into quantum many-body systems, enhancing our understanding of complex patterns in ground-state wave functions~\cite{oliviero2022magic} and the dynamics of quantum quenches~\cite{rattacaso2023stabilizer}. Unlike the original definition, which encompasses all Pauli terms and is thus challenging to estimate in practical scenarios, our approach focuses solely on low-weight Pauli terms. We define the low-weight stabilizer Rényi entropy as follows:
\begin{definition}[low-weight SRE]
    Given an $n$-qubit state $|\psi\rangle$, we define its low-weight stabilizer Rényi entropy of order $t$ and locality $k$ as 
\begin{equation}\label{def:lowweight-sre}
    M_{t,k}(|\psi\rangle)=(1-t)^{-1} \log \sum_{\hat{P} \in \mathcal{P}_{n,k}} \frac{|\langle\psi|\hat{P}| \psi\rangle|^{2t}}{\operatorname{card}(\mathcal{P}_{n,k})},
\end{equation}
where $\mathcal{P}_{n,k}$ is the set of $n$-qubit Pauli strings (including the identity) with phase $+1$ and Pauli weight \(w(\hat P) \leq k\). 
\end{definition}
This metric characterizes the extent to which a quantum state $|\psi\rangle$ can be stabilized by some low-weight Pauli tensor products, a concept particularly relevant in quantum error correction contexts~\cite{Breuckmann2021Quantum}.

In the Floquet-MBL regime ($W<W^\star$), the dynamics admit an extensive set of quasi-local integrals of motion (the $\ell$-bits), $\{\hat\tau^z_j\}_{j=1}^n$. Deep in the localized phase, the memory of the initial computational basis state is preserved, implying that each $\hat\tau^z_j$ maintains an $\mathcal O(1)$ overlap with the local Pauli operator $\hat\sigma^z_j$. Consequently, there exist constants $\beta(W)>0$ and a subset $\mathcal{Z}_{n,k}\subseteq\mathcal{P}_{n,k}$ consisting of $Z$-strings with $|\mathcal{Z}_{n,k}| = \Theta(n^k)$ such that for all $\hat P_S\in\mathcal{Z}_{n,k}$ one has $|\langle\psi|\hat P_S|\psi\rangle|\ge \beta(W)$. Since the total cardinality of the Pauli set scales as $\lvert\mathcal P_{n,k}\rvert=\sum_{w=0}^k 3^w\binom{n}{w}=\Theta(3^k n^k)$, these structured correlations constitute a constant fraction $\Theta(3^{-k})$ of $\mathcal P_{n,k}$. Therefore, the average overlap scales as
\begin{equation}
\begin{aligned}
\frac{1}{\operatorname{card}(\mathcal{P}_{n,k})} \sum_{\hat{P} \in \mathcal{P}_{n,k}}
|\langle\psi|\hat{P}| \psi\rangle|^{2t}
&\ge \Theta(3^{-k})\,\beta(W)^{2t}
\\&= \Theta(1),
\end{aligned}
\label{eq:lbit-bound}
\end{equation}
which implies that $M_{t,k}$ remains $\mathcal O(1)$ for all $n$ (at fixed $t,k$) throughout the MBL phase ($W<W^\star$).

By contrast, in the thermal phase ($W>W^{\star}$), the eigenstate thermalization hypothesis suggests $|\langle\psi|\hat P|\psi\rangle|=\mathcal O(2^{-n/2})$ for any fixed-weight $\hat P$. Because $M_{t,k}$ is a moment-generating functional of the overlaps in Eq.~\eqref{def:lowweight-sre}, the bound in Eq.~\eqref{eq:lbit-bound} keeps $M_{t,k}$ of order unity in the MBL phase, whereas in the thermal phase the overlaps become exponentially small and $M_{t,k}$ approaches its Haar benchmark. This suggests that the difference between $M_{t,k}$ and its Haar benchmark serves as a potential order parameter for the MBL--thermalization transition.

Beyond its physical relevance, $M_{t,k}$ also quantifies how close the output-state ensemble is to forming higher-order designs, as it depends on the $(2t)$-th moments of the distribution:
\begin{theorem}\label{theorem:SE}
    Assume that $\{\hat{U}(\boldsymbol{\vartheta})\}$, with $\boldsymbol{\vartheta}$ uniformly sampled from $\Theta$, forms a unitary $2t$-design. Then any arbitrary $n$-qubit input state $|\psi_{\text{in}}\rangle$ transformed by these operations satisfies:
    \begin{equation}
\begin{aligned}
    &\int_{\Theta}  
  \frac{1}{\operatorname{card}(\mathcal P_{n,k})}
  \sum_{\hat P\in\mathcal P_{n,k}}
  \langle\psi(\boldsymbol \vartheta)|\hat P|\psi(\boldsymbol \vartheta)\rangle^{2t} d\boldsymbol \vartheta\\ = &\frac{\binom{2^{n} +2t - 1}{2t} - \binom{2^{n-1} +t - 1}{t}}{\operatorname{card}(\mathcal{P}_{n,k})\binom{2^{n} +2t - 1}{2t}}+\frac{\binom{2^{n-1} +t - 1}{t}}{\binom{2^{n} +2t - 1}{2t}}.
\end{aligned}
\end{equation}
Let this value be denoted by $T_{t,k}$. By Jensen's inequality, the expected low-weight stabilizer Rényi entropy of order $t$ and locality $k$ of $|\psi(\boldsymbol{\vartheta})\rangle$ satisfies:
\begin{equation}\label{eq:M_Haar}
\begin{aligned}
    \int_{\Theta} M_{t,k}(|\psi(\boldsymbol{\vartheta})\rangle) d\boldsymbol{\vartheta} \geq \frac{\log(T_{t,k})}{1-t}.
\end{aligned}
\end{equation}
\end{theorem}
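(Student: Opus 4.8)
The plan is to trade the design average for a Haar integral, linearize the $2t$-th power with the tensor-power trick, and reduce each term to a single symmetric-subspace trace that I evaluate through the spectrum of Pauli operators. Since $\{\hat U(\boldsymbol\vartheta)\}$ is a unitary $2t$-design, any balanced polynomial of degree $2t$ in the entries of $\hat U$ and $\hat U^\dagger$ has the same $\Theta$-average as its Haar average, so I may replace $\int_\Theta d\boldsymbol\vartheta$ by $\int_{\mathrm U(2^n)} d\mu_{\mathrm H}$. Writing $\rho_{\mathrm{in}}=|\psi_{\mathrm{in}}\rangle\langle\psi_{\mathrm{in}}|$ and using $\langle\psi(\boldsymbol\vartheta)|\hat P|\psi(\boldsymbol\vartheta)\rangle=\operatorname{Tr}(\hat P\,\hat U\rho_{\mathrm{in}}\hat U^\dagger)$, the swap trick gives $\langle\psi(\boldsymbol\vartheta)|\hat P|\psi(\boldsymbol\vartheta)\rangle^{2t}=\operatorname{Tr}\bigl(\hat P^{\otimes 2t}(\hat U\rho_{\mathrm{in}}\hat U^\dagger)^{\otimes 2t}\bigr)$. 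I then invoke the standard Haar pure-state moment identity $\int (\hat U\rho_{\mathrm{in}}\hat U^\dagger)^{\otimes 2t}\,d\mu_{\mathrm H}=\Pi_{\mathrm{sym}}/\binom{2^n+2t-1}{2t}$, with $\Pi_{\mathrm{sym}}$ the projector onto the symmetric subspace of $(\mathbb C^{2^n})^{\otimes 2t}$, so each Pauli term collapses to $\operatorname{Tr}(\hat P^{\otimes 2t}\Pi_{\mathrm{sym}})/\binom{2^n+2t-1}{2t}$.

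The crux is evaluating $\operatorname{Tr}(\hat P^{\otimes 2t}\Pi_{\mathrm{sym}})$, which equals the complete homogeneous symmetric polynomial $h_{2t}$ in the eigenvalues of $\hat P$. For $\hat P=I$ all $2^n$ eigenvalues are $+1$, giving $h_{2t}(1,\dots,1)=\binom{2^n+2t-1}{2t}$, so the identity term contributes exactly $1$. For any nonidentity Pauli, $\hat P$ has eigenvalues $\pm1$ each with multiplicity $2^{n-1}$, and the generating function $\sum_m h_m z^m=\prod_i(1-\lambda_i z)^{-1}=(1-z^2)^{-2^{n-1}}$ retains only even powers, yielding $h_{2t}=\binom{2^{n-1}+t-1}{t}$. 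I expect this to be the main obstacle: justifying the symmetric-power identity $\operatorname{Tr}(\hat P^{\otimes 2t}\Pi_{\mathrm{sym}})=h_{2t}$ cleanly and carrying out the $\pm1$-spectrum bookkeeping that produces the $(1-z^2)$ collapse.

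Assembling the average then splits $\mathcal P_{n,k}$ into the identity and the $\operatorname{card}(\mathcal P_{n,k})-1$ nonidentity strings,
\[
\frac{1}{\operatorname{card}(\mathcal P_{n,k})}\Bigl[\,1+\bigl(\operatorname{card}(\mathcal P_{n,k})-1\bigr)\frac{\binom{2^{n-1}+t-1}{t}}{\binom{2^n+2t-1}{2t}}\Bigr],
\]
and a direct rearrangement (clearing the common denominator $\operatorname{card}(\mathcal P_{n,k})\binom{2^n+2t-1}{2t}$) reproduces the stated closed form $T_{t,k}$.

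For the entropy bound I apply Jensen's inequality to the concave map $x\mapsto\log x$. Setting $f(\boldsymbol\vartheta):=\operatorname{card}(\mathcal P_{n,k})^{-1}\sum_{\hat P}\langle\psi(\boldsymbol\vartheta)|\hat P|\psi(\boldsymbol\vartheta)\rangle^{2t}$ so that $M_{t,k}=(1-t)^{-1}\log f$, concavity gives $\int\log f\,d\boldsymbol\vartheta\le\log\int f\,d\boldsymbol\vartheta=\log T_{t,k}$; multiplying by $(1-t)^{-1}<0$ for the relevant order $t>1$ reverses the inequality and yields $\int_\Theta M_{t,k}\,d\boldsymbol\vartheta\ge\log(T_{t,k})/(1-t)$. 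One minor point worth flagging is that $\hat P$ is Hermitian so $\langle\psi|\hat P|\psi\rangle$ is real and, with $2t$ even, the $|\cdot|^{2t}$ appearing in the definition of $M_{t,k}$ coincides with the unsigned $2t$-th power used throughout, so no absolute-value subtleties arise.
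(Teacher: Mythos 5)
Your proof is correct, and the overall skeleton matches the paper's: replace the design average by a Haar average, reduce the $2t$-th moment to $\operatorname{Tr}(\hat P^{\otimes 2t}\Pi_{\mathrm{sym}})/\binom{2^n+2t-1}{2t}$ via $\mathbb E_\psi[(|\psi\rangle\langle\psi|)^{\otimes 2t}]=\Pi_{\mathrm{sym}}/\binom{2^n+2t-1}{2t}$, split $\mathcal P_{n,k}$ into the identity plus traceless strings, and finish with Jensen applied to the concave logarithm with the sign flip from $(1-t)^{-1}<0$. Where you genuinely diverge is at the crux: the paper evaluates the symmetric-subspace overlap by expanding $\hat P^{\otimes 2t}$ in an explicit $\pm$-eigenbasis over weakly ordered multi-indices and then asserts the result of "a short combinatorial count," whereas you identify $\operatorname{Tr}(\hat P^{\otimes 2t}\Pi_{\mathrm{sym}})$ with the complete homogeneous symmetric polynomial $h_{2t}$ of the spectrum (the character of the symmetric power) and extract the coefficient from the generating function $\prod_i(1-\lambda_i z)^{-1}=(1-z^2)^{-2^{n-1}}$, which cleanly yields $\binom{2^{n-1}+t-1}{t}$. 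Your route is arguably tighter: it supplies the combinatorial step the paper leaves implicit, it makes transparent why only the $\pm1$ multiplicities (and not the Pauli structure) matter, and the identity-operator case $h_{2t}(1,\dots,1)=\binom{2^n+2t-1}{2t}$ falls out of the same formula rather than being treated separately. The paper's explicit eigenbasis decomposition, on the other hand, is more elementary and self-contained for readers unfamiliar with symmetric-function characters. Your closing remarks on the absolute value being immaterial for Hermitian $\hat P$ with even exponent, and on $t>1$ being needed for the Jensen sign flip, are both correct and worth keeping.
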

Thus, computing $M_{t,k}$ provides a convenient witness against $2t$-design behavior from low-weight moments. We denote the right-hand side of Inequality~\eqref{eq:M_Haar} as $\widetilde {M}_{t,k}^{(\mathrm{Haar})}$. Specifically, $M_{2,2}$ probes fourth moments (hence 4-design signatures).
A significant deviation from the Haar benchmark indicates that the output ensemble is far from Haar-random at the level of fourth moments; in our setting this correlates with, and provides evidence for, staying away from the unitary-2-design regime relevant to barren plateaus. The proof for Theorem~\ref{theorem:SE} is provided in Appendix~\ref{theoremproof:SE}.

\vspace{1em}

\textit{Characterizing MBL and Thermalization--} Utilizing the inverse participation ratio, entanglement entropy, and the low-weight stabilizer Rényi entropy as metrics, we investigate the MBL--thermalization phase transition in a layered ``square circuit" whose layer depth equals the system size, \(D=n\). With \(n\) qubits, each layer contains \(\Theta(n)\) one- and two-qubit rotations, so the total number of elementary gates is \(\Theta(nD)=\Theta(n^{2})\). Such circuits are known to exhibit barren plateaus when all parameters are chosen independently at random~\cite{Cerezo2021Cost}.

Unless stated otherwise, the trial state is taken to be a product state
\begin{equation}
|\psi_t\rangle
   = \bigotimes_{j=1}^{n} U_j^{(\mathrm{Haar})}   |\uparrow \rangle_j,
\end{equation}
where each $U_j^{(\mathrm{Haar})}$ represents a single-qubit Haar-random unitary operation applied to qubit $Q_j$. A shallow unitary $\hat{U}_t$ then maps \(|\psi_t\rangle\) to the computational basis state of highest overlap.

We study two graphs:
\begin{itemize}
  \item A 1D ring, in which each qubit is coupled only to its nearest neighbors.
  \item A \emph{circulant graph} \(\mathrm{Ci}_n(1,2)\), in which each qubit \(Q_j\) is coupled to qubits \(Q_{j \pm 1}\) and \(Q_{j \pm 2}\) (mod \(n\)), i.e.\ edges of length 1 \emph{and} 2 around the ring.
\end{itemize}

\begin{figure}[t]
	\centering
	\includegraphics[width=8.25cm]{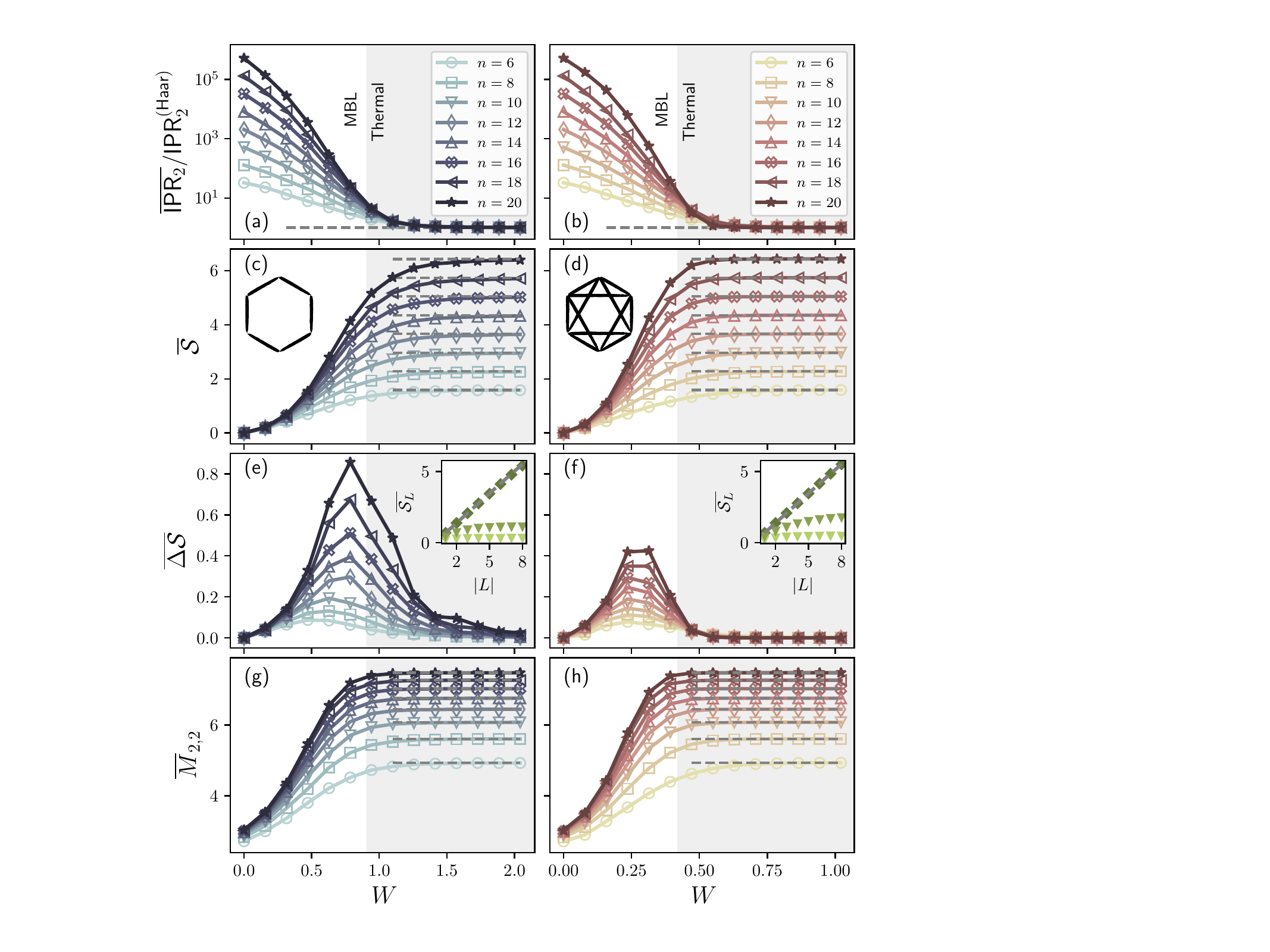}
	\caption{Characterization of the MBL--thermalization transition. Panels (a, c, e, g) represent results for the 1D ring topology, and panels (b, d, f, h) represent results for the $\mathrm{Ci}_{n}(1,2)$ topology. 
\textbf{(a, b)} Ratios of the inverse participation ratio, $\text{IPR}_2$ [Eq.~\eqref{eq:IPR_2}], to the Haar-random inverse participation ratio, $\text{IPR}_2^{(\mathrm{Haar})}$ [Eq.~\eqref{eq:IPR_Haar}], as a function of kick strength $W$. 
 The grey dashed lines correspond to $1$. 
\textbf{(c, d)} Half-chain entanglement entropy versus kick strength $W$. 
 The grey dashed lines correspond to $S^{(\mathrm{Page})}$ [Eq.~\eqref{eq:S^Page}]. 
 Insets: Demonstration of the topologies of a 6-qubit ring and a $\mathrm{Ci}_{6}(1,2)$ graph. 
\textbf{(e, f)} Entanglement entropy fluctuations versus kick strength $W$. 
 Insets: Entanglement entropy of a local region $L$ versus the size of $L$ for both MBL ($W = 1/5, 2/5$ for the 1D ring topology and $W = 1/10, 1/5$ for the $\mathrm{Ci}_{n}(1,2)$ topology) and thermal ($W = 7/5$ for 1D ring and $W = 7/10$ for $\mathrm{Ci}_{n}(1,2)$) phases in a 20-spin chain. 
 Darker green markers indicate circuits with higher $W$. 
 The grey dashed lines correspond to $S^{(\mathrm{Page})}$. 
\textbf{(g, h)} The low-weight stabilizer Rényi entropy of order $2$ and locality $2$ [Eq.~\eqref{def:lowweight-sre}] versus kick strength $W$. 
 The grey dashed lines correspond to the lower bound $\widetilde {M}_{2,2}^{(\mathrm{Haar})}$ [Eq.~\eqref{eq:lowerbound_SE}].}
	\label{fig:MBL-delocalization transition}
\end{figure}

The results for average $\text{IPR}_2$, half-chain entanglement entropy and its fluctuations, and $M_{2,2}$ are depicted in Fig.~\ref{fig:MBL-delocalization transition} with each point averaged over at least 200 samples.

To systematically estimate the critical kick strength \(W^\star\) we use an operational, finite-size criterion.  
For each system size \(n\) we scan \(W\) on a fine grid, record three diagnostics  
(i) \(\log_{10}\overline{\mathrm{IPR}_2}\),  
(ii) half-chain entropy \(\overline{S}(n/2,n)\),  
(iii) low-weight stabilizer Rényi entropy \(\overline{M}_{2,2}\).  
Spline fits to these observables yield the transition points as the positions of maximal susceptibility  
\begin{equation}
  W^\star_{\mathcal X}(n)=
  \arg\max_W \bigl|\partial_W^{2}\mathcal X(W)\bigr|.
\end{equation}
for \(\mathcal X\in\{\log\mathrm{IPR}_2,S,M_{2,2}\}\). We quote the crossover as the average over the three observables and all \(n\) considered,
\begin{equation}
  W^\star=\frac{1}{3}\bigl(
    \overline{W}^\star_{\log\mathrm{IPR}_2}
   +\overline{W}^\star_{S}
   +\overline{W}^\star_{M_{2,2}}
  \bigr),
\end{equation}
giving \(W^\star\approx 0.90\) for the 1D ring and \(W^\star\approx0.42\) for \(\mathrm{Ci}_{n}(1,2)\); the spread among the three estimates is around \(0.1\).

For both graph types, when the kick strength is below a certain threshold $W^\star$, the output state remains localized to the trial state:
\begin{equation}
    \overline{\text{IPR}}_2 \gg \text{IPR}_2^{(\mathrm{Haar})} =\frac{2}{2^n+1};
    \label{eq:IPR_Haar}
\end{equation}
the average half-chain entanglement entropy, $\overline{S}(\frac{n}{2},n)$, is significantly lower than the Page value for a random pure state, $S^{(\mathrm{Page})}(\frac{n}{2},n)$~\cite{Page1993Average, Foong1994Proof}:
\begin{equation}\label{eq:S^Page}
\begin{aligned}
    \overline{S} \left(\frac{n}{2},n\right)
      &\ll S^{(\mathrm{Page})} \left(\frac{n}{2},n\right)
      \\&= \sum_{j=2^{n / 2}+1}^{2^n} \frac{1}{j}-\frac{2^{n / 2}-1}{2 \cdot 2^{n / 2}};
\end{aligned}
\end{equation}
the entanglement entropy fluctuation, quantified by the sample variance 
$\overline{\Delta \mathcal{S}} := \mathrm{Var}[S(n/2,n)]$, is substantial; and the output state can be stabilized by some low-weight Pauli stabilizers to a certain extent
\begin{equation}
\begin{aligned}
    \overline{M}_{2,2} \ll M_{2,2}^{(\mathrm{Haar})} = &\log (\operatorname{card}(\mathcal{P}_{n,2})) - \mathcal{O}(2^{-n}).
\end{aligned}
\end{equation}
A convenient analytic lower bound on
$M_{2,2}^{(\mathrm{Haar})}$ obtained from the exact
moment formula with $2t=4$ is
\begin{equation}
\begin{aligned}
\label{eq:lowerbound_SE}
  \widetilde M_{2,2}^{(\mathrm{Haar})}
  = &-\log \biggl(
      \frac{1}{\operatorname{card}\mathcal P_{n,2}}
      \\&+\left(1-\frac{1}{\operatorname{card}\mathcal P_{n,2}}\right)
        \frac{\binom{2^{n-1}+1}{2}}
             {\binom{2^{n}+3}{4}}
    \biggr).
\end{aligned}
\end{equation}
When $W>W^\star$, $\text{IPR}_2$ approaches $2/(2^{n}+1)$,
$\overline{S}$ approaches $S^{(\mathrm{Page})}$,
$\overline{\Delta\mathcal S}$ collapses to zero, and
$\overline{M}_{2,2}$ gets close to 
\begin{equation}
    \log (\operatorname{card}(\mathcal{P}_{n,2})) = \log(\frac{9n^2}{2}-\frac{3n}{2}+1).
\end{equation}
Insets for a 20-qubit system illustrate that the entanglement entropy in the MBL phase (with $W = 1/5, 2/5$ for the 1D ring and $W = 1/10, 2/10$ for $\mathrm{Ci}_{n}(1,2)$) follows the area law, while entanglement entropy in the thermal phase adheres to the volume law. It is conjectured (and has been proven for 1D systems) that the entanglement of the ground states of gapped spin systems obeys the area law~\cite{Eisert2010Colloquium, Hastings2007An}. When preparing these states with variational quantum optimization, it is reasonable to initialize the variational quantum circuits in the MBL phase to maintain consistency in entanglement entropy and, preferably, entanglement structure. The optimal choice of the kick strength $W$ depends on the hardware topology, circuit depth, and the entanglement properties of the ground state of the target Hamiltonian. Though these numerical results primarily pertain to square circuits, local information remains well-preserved even in deeper circuits. Detailed illustrations are available in Appendix~\ref{Sec: Power Spectrum}.

\section{Variational Optimization}\label{Sec:Barrenplateaus}

This section examines the performance of the VQE in preparing the ground state of the Aubry-André model across different phases~\cite{Aubry1980Analyticity}. The Aubry–André model, extensively studied in condensed matter physics, characterizes a one-dimensional system of spinless fermions with nearest-neighbor hopping and a quasiperiodic on-site potential. The system's phase transitions are governed by the ratio of the quasiperiodic potential strength \(V\) to the hopping amplitude \(J\). Its Hamiltonian is:
\begin{equation}
\begin{aligned}
    \hat{\mathcal{H}}^{(\text{fm})}_{\text{AA}} = &-J \sum_{j} \bigl(\hat{c}_j^\dagger \hat{c}_{j+1} + \hat{c}_{j+1}^\dagger\hat{c}_j \bigr) + \Gamma \sum_{j}\hat{n}_j \hat{n}_{j+1} \\&+ V \sum_{j} \cos(2 \pi \alpha j + \phi) \hat{n}_j,
\end{aligned}
\end{equation}
where \( \hat{c}_j^\dagger \), \( \hat{n}_j \), and \( \phi \) denote the usual fermionic creation, number operators, and phase offset, respectively.

In the non-interacting limit ($\Gamma=0$), the model undergoes a transition from Anderson localization to an extended phase at a critical point determined by the ratio \(|V|/|J|\), with localization occurring when $|V| > 2|J|$ and delocalization when $|V| < 2|J|$. Upon introducing nearest-neighbor repulsive interactions ($\Gamma > 0$), the system exhibits transitions between MBL and ergodic phases~\cite{Iyer2013Many}. We emphasize that ``MBL initialization" and ``thermal initialization" refer exclusively to initial parameter ensembles of the variational circuit. Our protocol employs distinct trial states for different phases of the model, yet these initializations neither constrain the many-body phase of the target Hamiltonian nor restrict the optimization trajectory, as circuit parameters evolve freely after the initial optimization step.

Utilizing the Jordan-Wigner transformation~\cite{Jordan1928}, the qubit form of the Aubry-André model is derived as:
\begin{equation}
\label{Eq:AA Hamiltonian - qubit}
\begin{aligned}
\hat{\mathcal H}_{\text{AA}}
   &= -\frac{J}{2}\sum_{j}
        \bigl(
          \hat{\sigma}^{x}_{j}\hat{\sigma}^{x}_{j+1}
          +\hat{\sigma}^{y}_{j}\hat{\sigma}^{y}_{j+1}
        \bigr) +\frac{\Gamma}{4}\sum_{j}
        \hat{\sigma}^{z}_{j}\hat{\sigma}^{z}_{j+1} \\[2pt]
   &\quad -\frac{V}{2}\sum_{j}
        \bigl[
          \cos \bigl(2\pi\alpha j+\phi\bigr)
          +\Gamma/V
        \bigr]\hat{\sigma}^{z}_{j} .
\end{aligned}
\end{equation}

Before discussing variational optimization, we recall the notion of a \emph{weak barren plateau} introduced in Ref.\cite{Sack2022Avoiding}, the avoidance of which is sufficient to prevent barren plateaus. We also explain why bounding the von Neumann entropy is sufficient to exclude their occurrence.

An ensemble of states ${|\psi(\boldsymbol{\vartheta})\rangle}$ is said to exhibit a weak barren plateau if for every connected region $L$ of at most $k$ qubits the second Rényi entropy \(S_L^{(2)}:=-\log \left(\operatorname{Tr} (\rho_L^2)\right)\) satisfies
\begin{equation}
S^{(2)}_L \ge (1-\delta) S^{(\mathrm{Page})}(k,n),\quad 0<\delta<1,
\label{eq:WBP}\end{equation}
where $S^{(\mathrm{Page})}(k,n)$ is the Page entropy of a $k$-qubit subsystem of an $n$-qubit Haar-random pure state. Condition~\eqref{eq:WBP} implies that the variance of any $k$-local cost function scales as $\mathcal{O}(2^{-n})$, resulting in gradients that concentrate around zero~\cite{Sack2022Avoiding}.

Since Rényi entropies decrease monotonically with their order, we always have $S^{(2)}_L \le S_L$. Therefore, an upper bound of the form
\begin{equation}
S_L < \xi S^{(\mathrm{Page})}(k,n),\quad \xi < 1-\delta,
\label{eq:VNbound}
\end{equation}
precludes condition~\eqref{eq:WBP}, thus sufficing to rule out the occurrence of randomness-induced barren plateaus and ensuring gradients do not vanish exponentially~\cite{Sack2022Avoiding}.

\begin{observation}\label{Claim:violatingWBP}
For every $n$--qubit Hamiltonian $\hat{\mathcal{H}}$, every connected $k$--qubit region $L$, and every $\xi\in(0,1)$, there exists a kick strength $W\in[0,W^{\star})$ inside the MBL phase such that the Floquet--initialized parameter set $\Theta_{\mathrm{prd}}(W)$ fulfils
\begin{equation}
\mathbb{E}_{\boldsymbol{\vartheta} \in \Theta_{\mathrm{prd}}(W)}S_L  < \xi S^{(\mathrm{Page})}(k,n).
\end{equation}
Choosing $\xi<1-\delta$ immediately violates Eq.~\eqref{eq:WBP} at that $W$.
\end{observation}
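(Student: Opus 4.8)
The plan is to run a continuity argument anchored at the frozen point $W=0$, where the kick parameters vanish deterministically and the initialized circuit collapses to a product of diagonal gates. First I would compute the $W=0$ baseline. When $W=0$, the kick angles $\vartheta_x,\vartheta_{xx},\vartheta_{yy}$ are all drawn from $[-W,W]=\{0\}$, so $\hat H_x=\hat H_{xx}=\hat H_{yy}=0$ and the Floquet operator reduces to $\hat U_1=e^{-i\hat H_{zz}}e^{-i\hat H_z}$, which is diagonal in the computational basis. Acting with $\hat U_1^{D}$ on the computational-basis reference state $|\varphi_c\rangle$ therefore returns $|\varphi_c\rangle$ up to a global phase, and after decoding by $\mathcal E_t^{-1}$ the output in Eq.~\eqref{eq:effective-Floquet-layer} is exactly the product trial state $|\psi_t\rangle$. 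Hence $S_L(\boldsymbol{\vartheta})=0$ for every region $L$ and every draw of the steady parameters, giving $\mathbb E_{\boldsymbol{\vartheta}\in\Theta_{\mathrm{prd}}(0)}S_L=0$.

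Next I would establish continuity of the map $W\mapsto \mathbb E_{\boldsymbol{\vartheta}\in\Theta_{\mathrm{prd}}(W)}S_L$. Rescaling the kick angles as $\vartheta_{\mathrm{kick}}=W\,u$ with $u$ uniform on $[-1,1]$ (independent of the steady angles, which stay uniform on $[-\pi,\pi)$), the expectation becomes an average of $S_L(\mathrm{steady},Wu)$ over a $W$-independent measure. The reduced density matrix $\rho_L(\boldsymbol{\vartheta})$ is a real-analytic function of all rotation angles, since each gate is an entire matrix exponential, and the von Neumann entropy is a continuous function of $\rho_L$ (the map $x\mapsto -x\log x$ extends continuously to $x=0$, so continuity holds even across eigenvalue degeneracies). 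As the integrand is uniformly bounded by $k\log 2$, dominated convergence makes $W\mapsto\mathbb E[S_L]$ continuous on $[0,\infty)$.

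Finally I would conclude. Setting $\epsilon:=\xi\,S^{(\mathrm{Page})}(k,n)>0$, the baseline $\mathbb E[S_L]\big|_{W=0}=0$ together with continuity yields a threshold $W_0>0$ with $\mathbb E[S_L]<\epsilon$ for all $W\in[0,W_0)$. Because the MBL phase contains the nonempty interval $[0,W^\star)$ with $W^\star>0$, any $W\in(0,\min(W_0,W^\star))$ lies strictly inside the MBL phase and obeys $\mathbb E[S_L]<\xi\,S^{(\mathrm{Page})}(k,n)$; choosing $\xi<1-\delta$ then violates the weak-barren-plateau inequality~\eqref{eq:WBP}.

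I would flag two points as the main obstacles. The cleanest version assumes the product trial state used in the numerics, for which the $W=0$ output has strictly zero entanglement; for a general area-law (e.g.\ MPS) trial state the baseline is instead $S_L(\rho_t)$, and one must additionally assume $S_L(\rho_t)<\xi\,S^{(\mathrm{Page})}(k,n)$ so that the same neighborhood argument carries through. The more delicate conceptual point is the claim that $W=0$ and a right-neighborhood of it genuinely sit in the MBL phase: the argument itself only needs $W^\star>0$, i.e.\ that the localized window has positive width, which is supplied by the Floquet-MBL stability results and the finite-size estimates of $W^\star$ reported above rather than by this continuity argument.
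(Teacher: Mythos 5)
Your argument is internally sound for the statement as literally quantified, but it is not the paper's route --- in fact the paper supplies no formal proof of this Observation at all. It is presented as an observation justified by the physics of the Floquet-MBL phase (quasi-local integrals of motion enforcing area-law, i.e.\ $\mathcal O(1)$, entanglement for all $W<W^\star$) and substantiated empirically by the half-chain-entropy data in Fig.~\ref{fig:MBL-delocalization transition}(c--f) and the gradient data in Fig.~\ref{fig:BP transition}; the introduction explicitly qualifies it as holding ``under plausible assumptions.'' Your continuity argument anchored at $W=0$ is cleaner and genuinely rigorous given a product trial state: at $W=0$ the kick blocks are the identity, the Floquet operator is diagonal, the decoded output is $|\psi_t\rangle$ with $\mathbb E[S_L]=0$, and real-analyticity of $\rho_L$ in the angles plus continuity of the von Neumann entropy and dominated convergence yield a right-neighborhood in which $\mathbb E[S_L]<\xi\,S^{(\mathrm{Page})}(k,n)$. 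You also correctly isolate the two external inputs (a low-entanglement trial state, and $W^\star>0$).

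The caveat worth stating more forcefully is what this argument actually buys. The threshold $W_0$ produced by continuity may shrink with $n$ (and with $\xi$), so the proof only certifies an arbitrarily small, possibly $n$-dependent window near $W=0$ --- a regime in which the initialized circuit is essentially the identity in the logical frame. The identical argument would therefore ``prove'' the Observation for the close-to-identity initializations that the paper explicitly distinguishes itself from (Sec.~\ref{Sec:Setup} and Appendix~\ref{appendix:benchmark}). What the paper actually wants from this Observation is stronger and is the genuinely MBL-dependent content: that $\mathbb E[S_L]$ remains $\mathcal O(1)$ uniformly across the entire finite window $[0,W^\star)$ with $W^\star=\Theta(1)$, far below $S^{(\mathrm{Page})}(k,n)\approx k\log 2$. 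That uniform claim is exactly what the paper supports only numerically and via the $\ell$-bit heuristic, and your continuity argument cannot reach it. So your proof establishes the theorem as stated, but by a mechanism that bypasses, rather than exploits, many-body localization; it would be worth saying explicitly that the existence claim is trivially witnessed by $W=0$ itself (which lies in $[0,W^\star)$), and that the nontrivial physics lives entirely in the uniformity over the MBL window.
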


Practically, this implies that initializing optimization within the MBL phase using sufficiently small kick strengths ensures gradients remain sufficiently large, effectively mitigating barren plateaus during early optimization stages.

\vspace{1em}

\textit{Barren Plateaus--} We numerically substantiate observation~\ref{Claim:violatingWBP} by computing the initial gradients before optimization, $\nabla \langle\hat{\mathcal{H}}\rangle_{\boldsymbol{\vartheta}}$, for the Floquet initialization with varying $W$. The trial state is drawn from an ensemble of products of single-qubit Haar-random states, and $\hat{U}_t$ transforms the trial state into a computational basis state $|\varphi_c\rangle$. The energy associated with the initial parameters $\boldsymbol{\vartheta}$ is
\begin{equation}
    \langle\hat{\mathcal{H}}\rangle_{\boldsymbol{\vartheta}} =
\langle \varphi_c| \hat{U}^\dagger(\boldsymbol{\vartheta})\hat{U}_t\hat{\mathcal{H}}_{\text{AA}}
\hat{U}_t^\dagger\hat{U}(\boldsymbol{\vartheta})|\varphi_c\rangle.
\end{equation}

For simplicity, we consider the critical point of the non-interacting Aubry-André model, setting $\alpha = (\sqrt{5}-1)/2$, $\phi = 0$, $J = 1$, $\Gamma = 0$, and $V = 2$. Fig.~\ref{fig:BP transition} plots the $\ell_\infty$-norm of the gradient as a function of $W$ for various system sizes, averaged over 1000 samples per point. In the MBL phase, the gradients remain stable as $n$ increases (though they decay with $W$), while in the thermal phase, the gradients decay exponentially with $n$, indicating the onset of barren plateaus. This gradient norm positively correlates with the deviation of $M_{t,k}$ from its Haar benchmark. Note that, although scale-independent points may appear within the MBL phase, they do not mark the precise transition to thermalization. Notably, most earlier studies characterize barren plateaus through the variance of the (zero-mean) gradient rather than its maximum component.
In Appendix~\ref{Appendix:inftynorm_variance} we prove Proposition~\ref{prop:inftynorm_variance}, which states that—under the usual zero-mean assumption—a non-negligible probability of observing a large $\ell_\infty$-gradient necessarily implies an inverse-polynomial lower bound on that variance.

\begin{figure}[t]
	\centering
	\includegraphics[width=8.1cm]{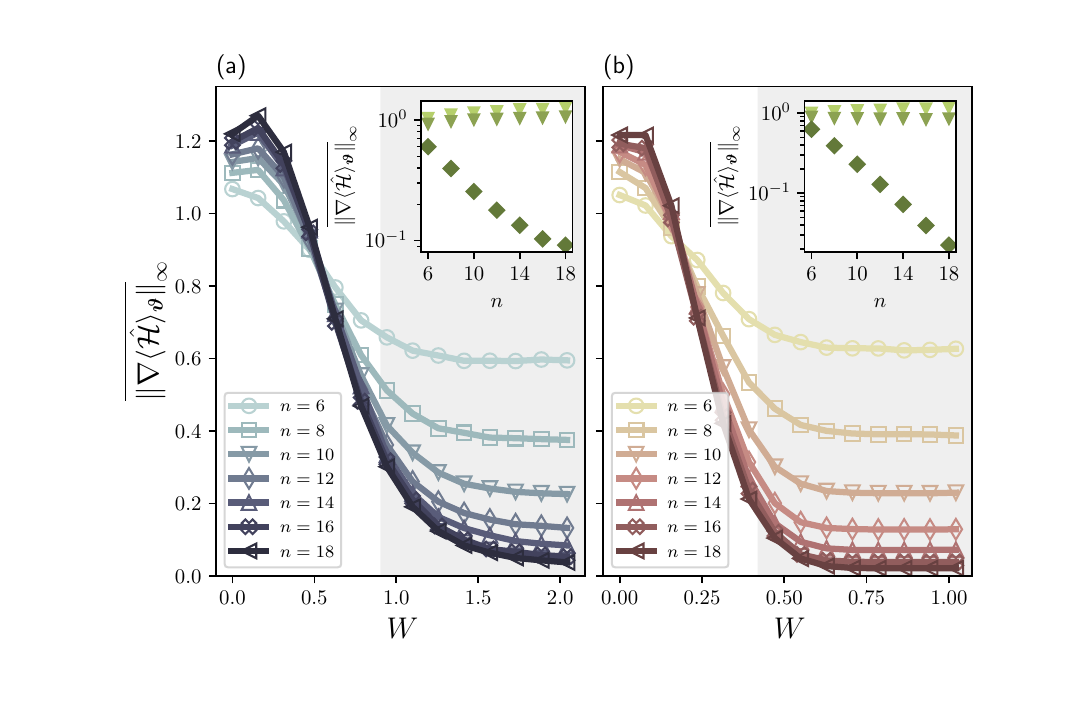}
	\caption{Visualization of the absence of barren plateaus in the MBL phase and their presence in the thermal phase after sampling the initial circuit parameters, analyzed using the target Hamiltonian $\hat{\mathcal{H}}_{\text{AA}}$ (defined in Eq.~\eqref{Eq:AA Hamiltonian - qubit}). \textbf{(a)} $\ell_\infty$-norm of the energy gradient plotted as a function of kick strength $W$ for a 1D ring topology. \textbf{(b)} $\ell_\infty$-norm of the energy gradient as a function of $W$ for a $\mathrm{Ci}_{n}(1,2)$ lattice topology. Insets illustrate the gradient scaling at various $W$ values, emphasizing the contrast between the MBL phase (e.g., $W = 1/5, 2/5$ for the 1D ring and $W = 1/10, 1/5$ for the $\mathrm{Ci}_{n}(1,2)$ topology) and the thermal phase (e.g., $W = 7/5$ for the 1D ring and $W = 7/10$ for the $\mathrm{Ci}_{n}(1,2)$ topology). Darker green markers in the insets denote circuits with higher $W$.}
	\label{fig:BP transition}
\end{figure}

Motivated by the information-preserving nature of the MBL phase, we define a uniform kick scaling operation and establish a theoretical lower bound on the gradient along a linear path connecting MBL- and thermally-initialized parameters.

\begin{definition}[Uniform kick scaling]
\label{def:uniform-kick-scaling}
Let $\boldsymbol\vartheta = (\boldsymbol\vartheta^{(1)}, \dots, \boldsymbol\vartheta^{(D)}) \in \mathbb{R}^{D( m_s +  m_k)}$ be a Floquet-initialized parameter point, i.e., $\boldsymbol\vartheta^{(\ell)} = \boldsymbol\vartheta^{(1)}$ for all $\ell$. \(m_s\) and \(m_k\) are the number of steady and kick parameters per layer, respectively. Given a scale factor $\lambda > 1$, we define the \emph{uniform kick scaling} operation as the map
\begin{equation}
(\vartheta_{j}^{(\ell)})_{\ell=1}^{D}\longmapsto
  (\lambda\vartheta_{j}^{(\ell)})_{\ell=1}^{D},
\end{equation}
for all kick parameters  \(j\)
while leaving the steady parameters unchanged.
\end{definition}

\begin{theorem}
\label{theorem:grad_bound}
Let \( U(\boldsymbol{\vartheta}) \) be a depth-\(D\) layered circuit, Floquet-initialized at \( \boldsymbol{\vartheta}_{\mathrm{MBL}} \in \Theta_{\mathrm{prd}}(W) \) with \( W < W^\star \), and let \( \hat{\mathcal{H}} \) be a Hermitian observable with \( \|\hat{\mathcal{H}}\| \le 1 \). Let \( \boldsymbol{\vartheta}_{\mathrm{th}} \) be obtained by scaling all kick parameters of \( \boldsymbol{\vartheta}_{\mathrm{MBL}} \) by a factor \( \lambda > 1 \), and define the linear interpolation
\begin{equation}
\gamma(s) := (1 - s)\boldsymbol{\vartheta}_{\mathrm{MBL}} + s  \boldsymbol{\vartheta}_{\mathrm{th}}, \quad s \in [0,1].
\end{equation}
If \( \langle \hat{\mathcal{H}} \rangle_{\boldsymbol\vartheta_{\mathrm{th}}} \) is exponentially small in \( n \), and \(D, m_k = \operatorname{poly}(n)\), where \( m_k \) is the number of kick parameters per layer, then
\begin{equation}
\label{eq:grad-lower-bound}
\mathbb{E}_{s \in [0,1]} \left\| \nabla_{\boldsymbol\vartheta} \langle\hat{\mathcal{H}}\rangle \big(\gamma(s)\big) \right\|_\infty
 \gtrsim 
\frac{ \lvert \langle\hat{\mathcal{H}}\rangle_{\boldsymbol\vartheta_{\mathrm{MBL}}} \rvert }
{ D m_k (\lambda - 1) W},
\end{equation}
which remains inverse-polynomial in \( n \) whenever \( \langle\hat{\mathcal{H}}\rangle_{\boldsymbol\vartheta_{\mathrm{MBL}}} = \Theta(1) \).
\end{theorem}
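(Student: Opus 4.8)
The plan is to run a one-dimensional mean-value argument along the interpolating path $\gamma(s)$, converting the large end-to-end change in the energy into a lower bound on the gradient norm via $\ell_1$--$\ell_\infty$ duality. Define the scalar profile $f(s):=\langle\hat{\mathcal{H}}\rangle(\gamma(s))$, which is real-valued because $\hat{\mathcal{H}}$ is Hermitian, with endpoints $f(0)=\langle\hat{\mathcal{H}}\rangle_{\boldsymbol\vartheta_{\mathrm{MBL}}}$ and $f(1)=\langle\hat{\mathcal{H}}\rangle_{\boldsymbol\vartheta_{\mathrm{th}}}$. Since $\gamma$ is affine, its velocity is the constant displacement vector $\Delta\boldsymbol\vartheta:=\boldsymbol\vartheta_{\mathrm{th}}-\boldsymbol\vartheta_{\mathrm{MBL}}$, and the chain rule gives $f'(s)=\nabla_{\boldsymbol\vartheta}\langle\hat{\mathcal{H}}\rangle(\gamma(s))\cdot\Delta\boldsymbol\vartheta$.

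First I would compute the $\ell_1$-norm of the displacement. By Definition~\ref{def:uniform-kick-scaling}, the steady components of $\Delta\boldsymbol\vartheta$ vanish, while each of the $Dm_k$ kick components equals $(\lambda-1)$ times the corresponding kick angle of $\boldsymbol\vartheta_{\mathrm{MBL}}$. Because the Floquet ensemble $\Theta_{\mathrm{prd}}(W)$ draws every kick parameter from $[-W,W]$, each nonzero entry has magnitude at most $(\lambda-1)W$, so $\|\Delta\boldsymbol\vartheta\|_1\le Dm_k(\lambda-1)W$. Hölder's inequality then yields the pointwise estimate $|f'(s)|\le\|\nabla_{\boldsymbol\vartheta}\langle\hat{\mathcal{H}}\rangle(\gamma(s))\|_\infty\,\|\Delta\boldsymbol\vartheta\|_1$, and rearranging gives
\begin{equation}
\|\nabla_{\boldsymbol\vartheta}\langle\hat{\mathcal{H}}\rangle(\gamma(s))\|_\infty
\ge \frac{|f'(s)|}{Dm_k(\lambda-1)W}.
\end{equation}

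Next I would integrate over $s\in[0,1]$ and apply the triangle inequality for integrals, $\int_0^1|f'(s)|\,ds\ge|\int_0^1 f'(s)\,ds|=|f(1)-f(0)|$, to obtain
\begin{equation}
\mathbb{E}_{s\in[0,1]}\|\nabla_{\boldsymbol\vartheta}\langle\hat{\mathcal{H}}\rangle(\gamma(s))\|_\infty
\ge \frac{|\langle\hat{\mathcal{H}}\rangle_{\boldsymbol\vartheta_{\mathrm{th}}}-\langle\hat{\mathcal{H}}\rangle_{\boldsymbol\vartheta_{\mathrm{MBL}}}|}{Dm_k(\lambda-1)W}.
\end{equation}
Finally, I would invoke the hypothesis that $\langle\hat{\mathcal{H}}\rangle_{\boldsymbol\vartheta_{\mathrm{th}}}$ is exponentially small: by the reverse triangle inequality the numerator is at least $|\langle\hat{\mathcal{H}}\rangle_{\boldsymbol\vartheta_{\mathrm{MBL}}}|$ minus an exponentially small term, which the $\gtrsim$ notation absorbs, yielding the stated bound. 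When $\langle\hat{\mathcal{H}}\rangle_{\boldsymbol\vartheta_{\mathrm{MBL}}}=\Theta(1)$ and $D,m_k=\operatorname{poly}(n)$ with $\lambda,W=\Theta(1)$, the right-hand side is inverse-polynomial in $n$.

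The argument is essentially a scalar mean-value estimate, so there is no deep analytical obstacle; the points demanding care are the correct accounting of the displacement's $\ell_1$-norm (the factor $Dm_k$ from counting kick parameters across all layers, together with the per-angle bound $W$ supplied by $\Theta_{\mathrm{prd}}(W)$) and the justification that the net energy drop is governed by the MBL endpoint alone. The genuine content lies not in the inequality chain but in the two input assumptions—that the MBL-initialized energy remains $\Theta(1)$ while the thermal endpoint concentrates at an exponentially small value—so the hardest part is ensuring these hypotheses are the ones actually guaranteed by the localization analysis, rather than the calculus that converts them into a gradient lower bound.
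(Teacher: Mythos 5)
Your proposal is correct and follows essentially the same route as the paper's proof in Appendix~\ref{appendix:grad_bound}: a fundamental-theorem-of-calculus argument along the affine path, Hölder's $\ell_1$--$\ell_\infty$ duality, the bound $\|\boldsymbol\delta\|_1\le Dm_k(\lambda-1)W$ from the kick parameters lying in $[-W,W]$, and the final reduction of the numerator using the exponential smallness of the thermal endpoint. Your closing remark that the real content lies in the two endpoint assumptions rather than the calculus matches the paper's own framing of when the bound is meaningful.
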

The proof for Theorem~\ref{theorem:grad_bound} is provided in Appendix~\ref{appendix:grad_bound}. This bound is meaningful in the MBL phase because its non-ergodic nature ensures the expectation value \(\langle \hat{\mathcal{H}} \rangle_{\boldsymbol\vartheta_{\mathrm{MBL}}}\) remains of order unity. In contrast, this expectation would be exponentially suppressed in the thermal phase, rendering the bound trivial.

\textit{Optimization--} We proceed with the VQE optimization process by fixing the Hamiltonian parameters to $\alpha = (\sqrt{5}-1)/2$, $\phi = 0$, and $J=1$. We employ the variational quantum eigensolver to prepare ground states of the one-dimensional $\hat{\mathcal{H}}_{\text{AA}}$ in different phases: non-interacting Anderson localized phase ($\Gamma=0, V = 4$), extended phase ($\Gamma=0, V = 1$), MBL phase ($\Gamma=3, V = 6$), and ergodic phase ($\Gamma=1, V = 2$). For the variational ansatz, we use a 1D ring connectivity square circuit with a depth $D=n$. We investigate three distinct initialization strategies as follows:
\begin{itemize}
\item MBL initialization: Floquet initialization with steady parameters uniformly sampled from $[-\pi,\pi)$ and kick parameters from $[-2/5, 2/5]$.
\item Thermal initialization: Floquet initialization with steady parameters uniformly sampled from $[-\pi,\pi)$ and kick parameters from $[-7/5, 7/5]$.
\item Random initialization: Each parameter in every layer uniformly sampled from $[-\pi,\pi)$.
\end{itemize}

We commence with a standard trial state $|\psi_t\rangle$ derived from an optimization of a low-dimensional matrix product state (MPS) featuring alternating bonds~\cite{Orus2019Tensor}. This MPS employs a bond dimension $\chi=2$ between the $(2j-1)$-th and the $(2j)$-th tensors, and $\chi=1$ between the $(2j)$-th and the $(2j+1)$-th tensors, for $j=1,2,\dots,n/2$. The quantum state corresponding to this MPS configuration is
\begin{equation}
\begin{aligned}
    \left|\psi_t\right\rangle=&\sum_{i_1, i_2, \ldots, i_n}\left[\prod_{j=1}^{n / 2} A^{i_{2 j-1}, i_{2 j}}\right]\\&\left[\prod_{j=1}^{n / 2-1} B^{i_{2 j}, i_{2 j+1}}\right]\left|i_1, i_2, \ldots, i_n\right\rangle,
\end{aligned}
\end{equation}
where $A^{i_{2 j-1}, i_{2 j}}$ are $2 \times 2$ tensor matrices for bonds of dimension $\chi=2$ and $B^{i_{2 j-1}, i_{2 j}}$ are scalar values for bonds of dimension $\chi=1$. These tensor parameters are tuned to minimize the trial energy $\langle \psi_t|\hat{\mathcal{H}}|\psi_t\rangle$. The resulting state $|\psi_t\rangle$ is subsequently transformed to a computational basis state $|\varphi_c\rangle$ using a shallow quantum circuit  $\hat{U}_t$, constituted by  $(n/2)$ two-qubit gates:
\begin{equation}
    \left|\varphi_c\right\rangle= \hat{U}_t|\psi_t\rangle =\prod_{j=1}^{n / 2} \hat{U}_{2 j-1,2 j}|\psi_t\rangle.
\end{equation} 
For our computations, $|\varphi_c\rangle$ is selected as the computational basis state exhibiting the highest overlap with $\left|\psi_t\right\rangle$.

\begin{figure}[t]
	\centering
	\includegraphics[width=8.2cm]{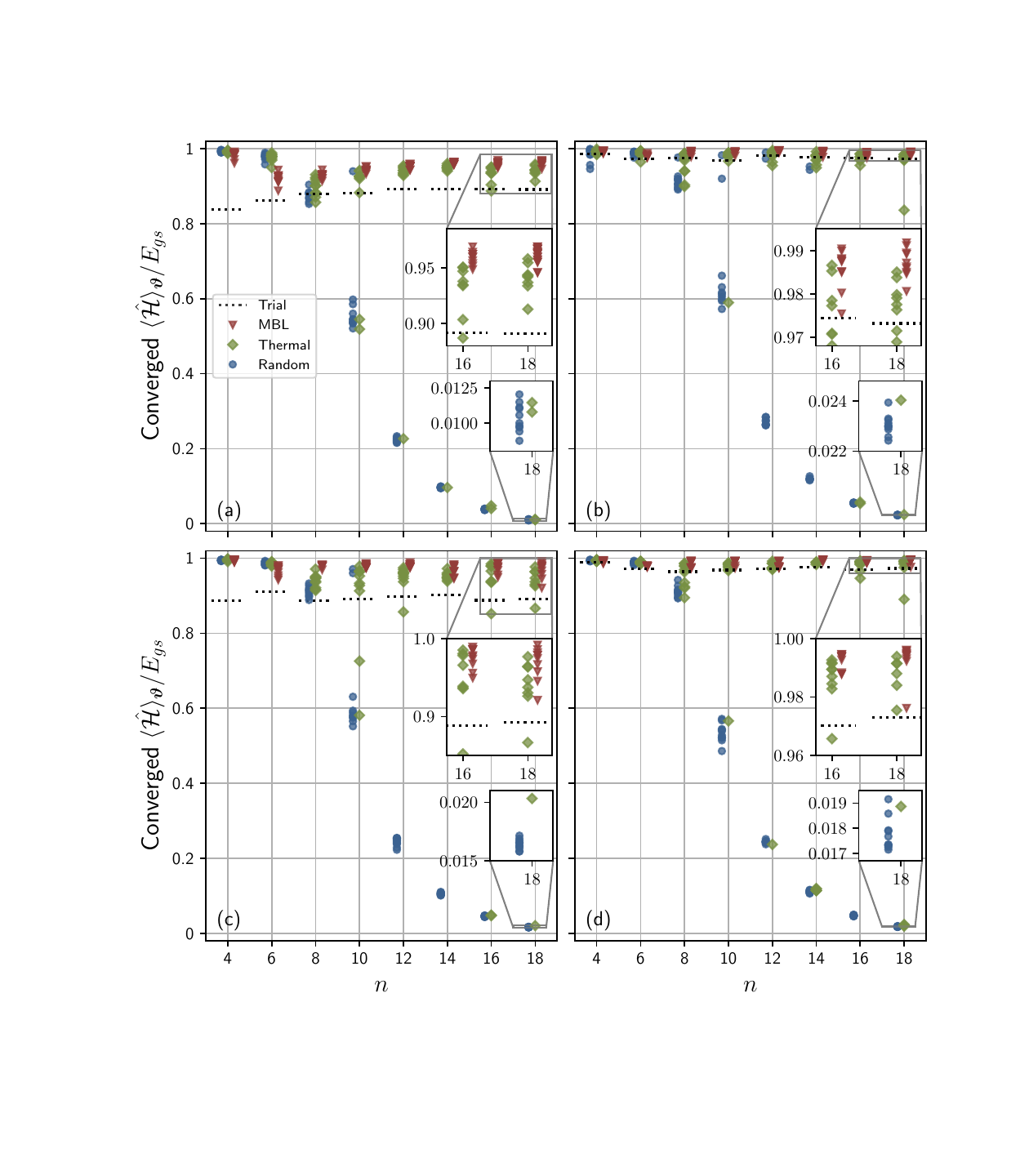}
	\caption{Converged approximation ratios from the variational quantum eigensolver for system sizes $n=4,6,8,\dots,18$ across 10 samples of initial parameters. This comparison illustrates the superiority of the MBL initialization ($W=2/5$, triangles) over thermal ($W=7/5$, diamonds) and random (circles) strategies. Black dotted lines mark the trial state energy. \textbf{(a)} $\hat{\mathcal{H}}_{\text{AA}}$ in the extended phase ($\Gamma=0, V = 1$). \textbf{(b)} $\hat{\mathcal{H}}_{\text{AA}}$ in the non-interacting Anderson localized phase ($\Gamma=0, V = 4$). \textbf{(c)} $\hat{\mathcal{H}}_{\text{AA}}$ in the ergodic phase ($\Gamma=1, V = 2$). \textbf{(d)} $\hat{\mathcal{H}}_{\text{AA}}$ in the MBL phase ($\Gamma=3, V = 6$). The insets provide a zoomed view for large systems.}
	\label{fig:OptimizedE}
\end{figure}

We proceed to the second stage: variational quantum optimization using a square circuit $\hat{U}(\boldsymbol{\vartheta})$. Initial parameters $\boldsymbol{\vartheta}$ are sampled according to the chosen initialization strategy, and optimization is performed via gradient descent with update rule \(\boldsymbol{\vartheta} \leftarrow \boldsymbol{\vartheta} - \eta \nabla \langle\hat{\mathcal{H}}\rangle_{\boldsymbol{\vartheta}}\). The learning rate is set to $\eta = 0.05$ for Fig.~\ref{fig:OptimizedE}(a–c) and $\eta = 0.01$ for (d). Optimization proceeds until the energy converges within a tolerance of 0.001 or a maximum of 1000 iterations is reached. Fig.~\ref{fig:OptimizedE} shows the ratio of the converged energy to the exact ground-state energy, averaged over 10 random initializations, as a function of system size and initialization scheme. MBL initialization consistently outperforms both thermal and random strategies, achieving approximation ratios close to unity across all Hamiltonian phases. In contrast, random initialization rapidly degrades with system size, often failing entirely for $n \gtrsim 14$, while thermal initialization performs moderately but inconsistently, occasionally producing low-energy states.

\begin{figure}[b]
	\centering
	\includegraphics[width=8.3cm]{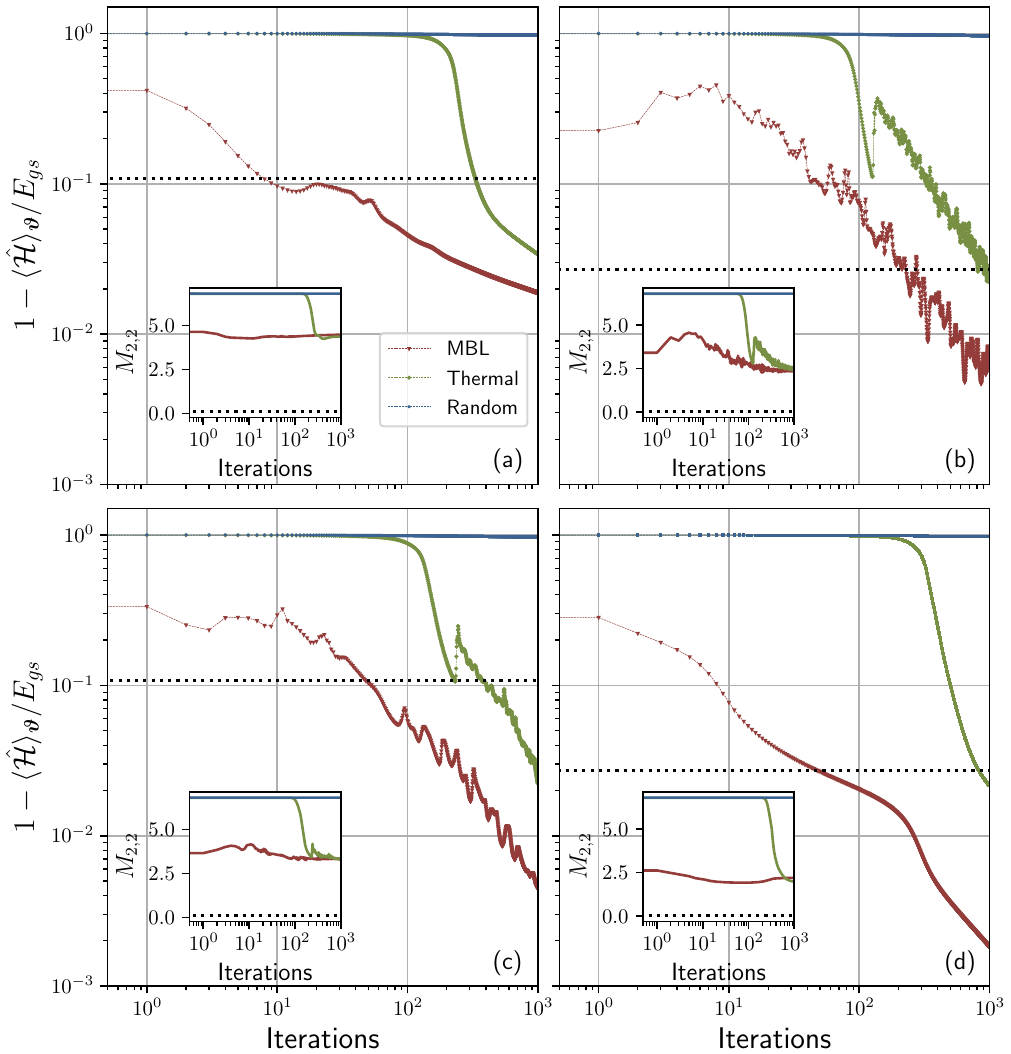}
	\caption{Relative energy error during optimization at $n=18$, showing the rapid convergence of MBL initialization. \textbf{(a)} $\hat{\mathcal{H}}_{\text{AA}}$ in the non-interacting Anderson localized phase ($\Gamma=0, V = 4$). \textbf{(b)} $\hat{\mathcal{H}}_{\text{AA}}$ in the extended phase ($\Gamma=0, V = 1$). \textbf{(c)} $\hat{\mathcal{H}}_{\text{AA}}$ in the MBL phase ($\Gamma=3, V = 6$). \textbf{(d)} $\hat{\mathcal{H}}_{\text{AA}}$ in the ergodic phase ($\Gamma=1, V = 2$). Black dotted lines indicate the energy of the trial state prior to optimization. Insets show the low-weight stabilizer Rényi entropy $M_{2,2}$ throughout the optimization, emphasizing how initialization impacts energy convergence and entropy behaviors. Each curve corresponds to a random sample of initial parameters.}
	\label{fig:OptimizedCurve}
\end{figure}

In addition to assessing initial gradients and final energies, our study is fundamentally focused on the optimization trajectory. We aim to ascertain whether MBL-based initializations can steer the optimization process towards more advantageous paths that prevent convergence to a random output state. Fig.~\ref{fig:OptimizedCurve} presents numerical results for the relative energy error and the low-weight stabilizer Rényi entropy $M_{2,2}$ throughout the optimization process for all four phases with a single instance of the initial circuit parameters. This visualization indicates that thermal initialization outperforms random initialization, which struggles to overcome barren plateaus and achieve lower energy states even after 1,000 iterations. The MBL initialization demonstrates performance superior to both, with the energy of the output state rapidly approaching that of the ground state. Furthermore, the $M_{2,2}$ trajectory underscores that when the circuit is initialized in the MBL phase, the output state remains far from its Haar benchmark throughout the optimization process.  This behavior correlates with the absence of observed barren-plateau signatures and supports the enhanced efficiency and success probability of the variational optimization under MBL-based initialization.

\section{Experimental Verification}~\label{Sec:IBMExperiment}
In this section, we detail the experimental observation of the MBL to thermalization phase transition using Floquet-initialized variational quantum circuits on the \texttt{ibm\_brisbane} quantum processor, a superconducting processor with 127 transmon qubits. Fig.~\ref{fig:IBMExperiment}(a) illustrates the architectural connectivity of the device, which includes 36 qubits of degree 3, 89 of degree 2, and 2 of degree 1. Table~\ref{tab:brisbane_parameters} lists the device's performance metrics, recorded during the experiments.

\begin{figure*}[tbh]
	\centering
	\includegraphics[width=15cm]{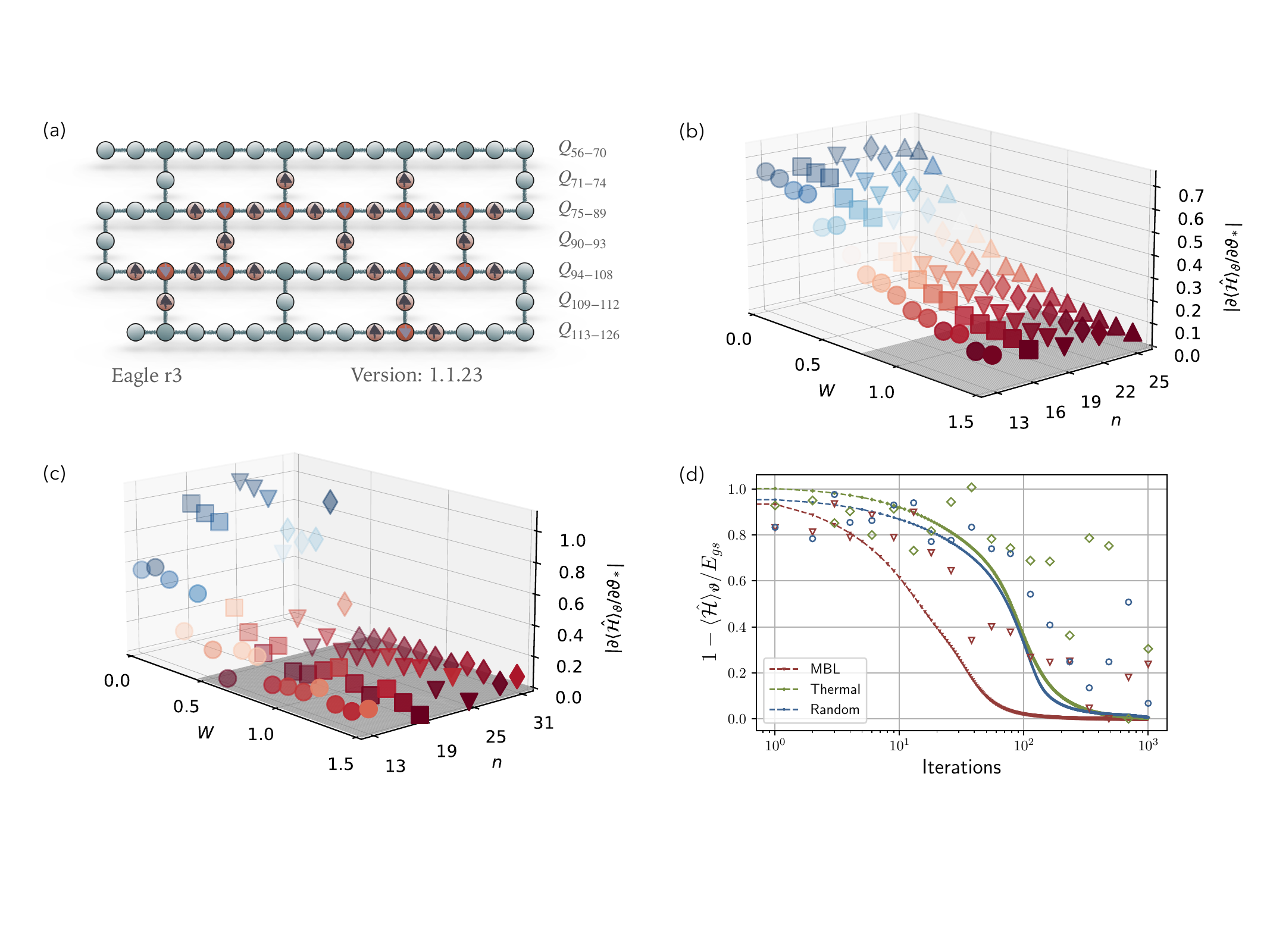}
	\caption{Experimental setup and analysis of the MBL--thermalization phase transition on IBM's \texttt{ibm\_brisbane} quantum processor. \textbf{(a)} Qubit layout of \texttt{ibm\_brisbane} alongside an intermittently coupled spin chain, represented by red circles (qubits) and connecting edges (couplings). Arrows indicate the initial alternating-spin Néel configuration. \textbf{(b)} Simulation results of the observed gradients of the Hamiltonian's expectation value as a function of the kick strength $W$ for system sizes $n = 13, 16, 19, 21, 25$. \textbf{(c)} Experimental results of the observed gradients for system sizes $n = 13, 19, 25, 31$, with gradients similarly demonstrating decay patterns as $W$ increases, indicating a phase transition at an experimentally adjusted point $W^\star \approx 1/2$. The experimental critical point is slightly shifted due to hardware noise. \textbf{(d)} Relative error during optimization shown numerically (lines) and experimentally (hollow markers) for system size $n=19$.}
	\label{fig:IBMExperiment}
\end{figure*}

\begin{table}[ht]
\centering
\renewcommand{\arraystretch}{1.5}
\begin{tabular}{l|l}
\hline
Parameter & Median Value \\
\hline
\# of qubits & 127\\
$T_1$ & \(232.17 \mu \mathrm{s}\)  \\
$T_2$ & \(
153.64 \mu\mathrm{s}\)   \\
ECR error & $7.802\times 10^{-3}$ \\
SX error & $2.428\times 10^{-4}$ \\
Readout error & $1.370\times 10^{-2}$ \\
Error per layered gate& 1.9\%  \\
\hline
\end{tabular}
\caption{Performance metrics of the \texttt{ibm\_brisbane} quantum processor. Critical parameters include the number of qubits, coherence times ($T_1$ and $T_2$), error rates for echoed cross-resonance (ECR) gates and $\sqrt{X}$ (SX) gates, readout error, and the average error per gate in a layered circuit configuration.}
\label{tab:brisbane_parameters}
\end{table}

\begin{table}[ht]
\centering
\renewcommand{\arraystretch}{1.5}
\begin{tabular}{l|l}
\hline
\textbf{Setup} & \textbf{Qubits Used} \\
\hline
$n=13$ & $Q_{73}$, $Q_{82\text{--}86}$, $Q_{92}$, $Q_{101\text{--}105}$, $Q_{111}$. \\
$n=19$ & $Q_{72}$, $Q_{78\text{--}84}$, $Q_{91}$, $Q_{92}$, $Q_{97\text{--}99}$, \\
 &  $Q_{101\text{--}105}$, $Q_{111}$. \\
$n=25$ & $Q_{72}$, $Q_{73}$, $Q_{78\text{--}88}$, $Q_{91\text{--}93}$, $Q_{97\text{--}99}$,  \\
 & $Q_{103\text{--}107}$, $Q_{111}$. \\
$n=31$ & $Q_{72}$, $Q_{73}$, $Q_{78\text{--}88}$, $Q_{91\text{--}93}$, $Q_{95\text{--}99}$, \\ &  $Q_{103\text{--}107}$, $Q_{109}$, $Q_{111}$,  $Q_{121\text{--}123}$.\\
\hline
\end{tabular}
\caption{Detailed qubits usage for different configurations with grouped subscript ranges}
\label{tab:qubits_usage}
\end{table}

For clarity in our experimental setup, we focus on a specialized configuration of $\hat{\mathcal{H}}_{\text{AA}}$ with parameters $J=1$, $V = \Gamma = -2J$, $\alpha = 0$, and $\phi = \pi$. This setup corresponds to the ferromagnetic Heisenberg model, governed by the Hamiltonian:
\begin{equation}
    \begin{aligned}
    \hat{\mathcal{H}}_{\text{FH}} = &-\frac{1}{2} \sum_{\substack{(i, j) \in \mathrm{edges}(G)}} \left(\hat{\sigma}_i^x\hat{\sigma}_{j}^x + \hat{\sigma}_i^y\hat{\sigma}_{j}^y + \hat{\sigma}_i^z\hat{\sigma}_{j}^z\right).
\end{aligned}
\end{equation}
This model is implemented on an intermittently coupled spin chain of $n$ qubits, where alternate qubits are linked to an additional, isolated qubit, leading to non-uniform interactions across all spins. This configuration aligns well with the connectivity graph of the \texttt{ibm\_brisbane} quantum processor. Notably, the circuit layout is dynamically optimized by the Qiskit compiler for $W=1/2$ to enhance performance in practical experiments. In one of our experiments, this configuration is applied to 31 qubits, depicted in Fig.~\ref{fig:IBMExperiment}(a) with red circles representing qubits and connecting edges. The ground states of \(\hat{\mathcal{H}}_{\text{FH}}\) are reachable with a shallow circuit, allowing us to isolate the gradient-collapse signal at the MBL--thermalization crossover without conflating it with state-preparation complexity.

Before the experiment, we numerically simulate the system using a Floquet initialization strategy where the rotation angles within $\hat{\mathcal{R}}_{x}$, $\hat{\mathcal{R}}_{xx}$, and $\hat{\mathcal{R}}_{yy}$ are uniformly sampled from the range $[-W,W]$. Our analysis focuses on the Pauli-$XX$ rotation at the center of the first layer of the spin chain, which has been verified to accurately represent the global trend observed across the entire gradient profile. A large gradient at a representative parameter (\(\vartheta_*\)) empirically correlates with trainability of the full landscape. Considering our specific interest in observing phase transitions marked by the emergence and disappearance of significant gradients, we opt for a naive trial state in an alternating-spin Néel configuration, denoted as $|\psi_t \rangle = |\varphi_c\rangle = \left|\uparrow\right\rangle\left|\downarrow\right\rangle\cdots \left|\uparrow\right\rangle\left|\downarrow\right\rangle$, visually represented in Fig.~\ref{fig:IBMExperiment}(a). We measure the gradient of the Hamiltonian's expectation value relative to $\vartheta_*$ by executing each circuit configuration 4096 times. The circuit depth is set to $D=\lfloor (n-1)/6\rfloor$ for a system with $n$ qubits. The average absolute gradient, depicted in Fig.~\ref{fig:IBMExperiment}(b) for 200 samples of initial parameters, displays a clear phase transition from MBL to thermalization around a point of $W \approx 4/5$.

To initiate the experiment efficiently with limited resources, we employ a representative specialized sampling protocol for the initial Floquet circuit parameters, which includes all rotation angles within $\hat{\mathcal{R}}_{x}$, $\hat{\mathcal{R}}_{z}$, $\hat{\mathcal{R}}_{xx}$, $\hat{\mathcal{R}}_{yy}$, and $\hat{\mathcal{R}}_{zz}$. These parameters are generated using the zeros of the Chebyshev polynomial of the first kind, specifically:
\begin{equation}
\vartheta_j = \pi\cos \left(\frac{(2 j-1) \pi}{2( m_s +  m_k)}\right),
\end{equation}
for  $j = 1, 2, \ldots,  m_s +  m_k$, with $ m_s = 2n$ and $ m_k = 3n$, so that all angles initially lie in $[-\pi,\pi]$. We arrange these zeros in a staggered pattern, placing positive values at even indices and negative values at odd indices within the rotation-angle array. The angles assigned to $\hat{\mathcal{R}}_{z}$ and $\hat{\mathcal{R}}_{zz}$ are then kept fixed and play the role of the static steady disorder, while the angles in $\hat{\mathcal{R}}_{x}$, $\hat{\mathcal{R}}_{xx}$, and $\hat{\mathcal{R}}_{yy}$ are further rescaled by the kick strength $W$ to ensure they fall within the range $[-W, W]$. Although this deterministic sampling deviates from the uniform randomness assumed in the theoretical analysis, it provides a representative set of disordered parameters that we expect to capture the essential features of the MBL--thermalization transition. Readout error mitigation is applied to enhance measurement accuracy, whereas gate error mitigation is intentionally omitted to preserve the intrinsic dynamics of the system.

The experiment is conducted across system sizes of $n = 13, 19, 25, 31$ using noisy physical qubits, as detailed in Table~\ref{tab:qubits_usage}. As shown in Fig.~\ref{fig:IBMExperiment}(c), the observed absolute gradient values display a gradual decline in magnitude as the kick parameter $W$ is increased, signaling a transition towards the thermal phase. Despite noise effects, an experimental critical transition point emerges near \(W=1/2\). Below this threshold, gradient magnitudes remain partially preserved. Notably, this experimental critical point is lower than the numerically predicted value $W=4/5$, highlighting the impact of hardware noise on shifting the observed critical threshold. However, the experimental results still corroborate the general theoretical trend. 

To more specifically illustrate the impact of the gradient on the optimization path, we plot the relative error throughout the optimization process for a system with $n=19$ in Fig.~\ref{fig:IBMExperiment}(d). It displays both numerically simulated energy and experimental results (that have been rescaled for error mitigation) for several selected steps, using identical parameter sets. The circuit parameters are optimized classically. Compared to thermal and random initializations, the optimization employing MBL initialization demonstrates a rapid convergence towards the ground state. The improvement here is modest because this Hamiltonian is intentionally simple—our aim is to validate gradient revival experimentally, not to showcase a hard energy-minimization task. More demanding benchmarks, including assessments of classical MPS limitations on long-range models, comparisons with close-to-identity initialization to highlight escape from local minima, and robustness tests against trial-state perturbations and finite-shot noise, appear in Appendix~\ref{appendix:benchmark}. These extensions demonstrate the protocol's practical advantages in challenging scenarios beyond the simple Heisenberg case.

\section{Conclusions and Outlook}~\label{Sec:Conclusions}
The primary motivation behind this research was to address the significant challenge of barren plateaus in variational quantum simulation, a phenomenon that considerably impedes the scalability and efficiency of quantum optimization. This investigation explored the transition from many-body localization (MBL) to thermalization within a universal variational quantum circuit framework. We introduced an MBL-based strategy aimed at mitigating excessive entanglement and thereby circumventing barren plateaus in variational quantum optimization. The application of this protocol led to marked enhancements in energy convergence for the Aubry-André model across various phases, promoting more efficient optimization processes. A significant aspect of our study was the experimental validation of the phase transition on the 127-qubit \texttt{ibm\_brisbane} quantum processor through gradient computation. Moreover, this research innovates by utilizing barren plateaus as a novel metric for identifying the critical point of the MBL--thermalization phase transition. While our findings offer empirical evidence and establish a conditional bound on gradient magnitudes, we cannot rule out the emergence of barren plateaus at later stages of training.

Our results contribute to a deeper understanding of how MBL can mitigate barren plateaus in variational quantum optimization, consistent with insights from Ref.~\cite{Cerezo2023Does} regarding the relationship between barren plateaus and classical simulability. This alignment is particularly evident in the early stages of optimization, given the classical simulability of MBL systems~\cite{Vidal2024Efficient}. Despite this, our methodology is intended to maintain substantial
optimization gradients, even during a possible crossover to a thermal phase, thereby potentially enabling operation in regimes previously considered classically intractable. Warm-start analyses warn, however, that training will still fail unless one can initialize increasingly close to the relevant region of attraction as the system grows~\cite{Mhiri2025Unifying}. Furthermore, Ref.~\cite{Lerch2024Efficient} shows that a polynomial-time classical surrogate of such a local patch can, in principle, be built after a moderate quantum data-acquisition stage. Whether such surrogates remain practical in our setting is an open—and important—question for future work.

This work opens several promising avenues for future research. One potential direction involves applying the MBL-based initialization protocol to quantum machine learning tasks, such as quantum generative adversarial networks~\cite{Zoufal2019Quantum}. Utilizing the unique properties of MBL systems could potentially lead to more efficient and trainable quantum machine learning algorithms capable of generating distributions beyond the capabilities of classical methods. Further, exploring the interplay between MBL and other aspects of quantum circuits, such as measurement-induced phase transitions~\cite{Roeland2023Measurement} and information scrambling~\cite{Mi2021Information}, could yield valuable insights into quantum dynamics. Additionally, employing MBL in specialized variational quantum protocols, including QAOA variants~\cite{Yu2022Quantum, Yu2023Solution, Yu2025Warm, Sun2023Improving} and the variational simulation of time evolution~\cite{Yuan2019theoryofvariational, Benedetti2021Hardware}, presents further exploration possibilities.

However, several open questions and potential enhancements persist within our framework. One critical issue is the selection of optimal trial states whose entanglement structures align closely with those of the target system's ground state. This alignment is crucial for enhancing the efficiency of quantum optimization but remains a complex challenge. While trial states from matrix product states can be compiled into log-depth shallow circuits~\cite{Ran2020Encoding, Rudolph2023Decomposition, Malz2024Preparation}, an alternative is the use of measurement-driven protocols to prepare highly entangled states in constant-depth, thus circumventing light-cone constraints~\cite{Baumer2024Efficient, Sahay2024Finite, Cao2025Measurement}. Another significant consideration is the quantitative assessment of hardware noise effects on the critical points of MBL and thermalization phases. Fully understanding how this factor influences the transition points and the overall dynamics of variational quantum circuits is essential for enhancing their practical effectiveness. In parallel, MBL-based initialization is compatible with complementary post-processing strategies that mitigate algorithmic errors in quantum optimization, such as energy-extrapolation schemes~\cite{Cao2022Mitigating}. Additionally, integrating MBL initialization with advanced techniques such as joint Bell measurements or circuit parallelism could significantly accelerate optimization processes~\cite{Cao2024Accelerated, Mineh2023Accelerating}.

\acknowledgments
We are grateful to Luya Lou, Paul Skrzypczyk, Yunlong Yu, Tianyi Hao, Qihao Guo, Zheng-Hang Sun, B. \"{O}zguler, and M. Vavilov for insightful discussions. 
This work was supported by the Theory of Quantum Matter Unit, OIST. 
A part of this research was conducted while visiting the Okinawa Institute of Science and Technology (OIST) through the Theoretical Sciences Visiting Program (TSVP). 
We acknowledge the use of IBM Quantum services for this work.

\onecolumngrid
\bibliographystyle{quantum}
\bibliography{main}

\onecolumngrid
\begin{appendices}

\section{Proof of Circuit Universality}\label{obproof:universality}
\begin{proposition}[Circuit universality]
The sequential layered variational circuit in Fig.~\ref{fig:schematic} is universal: for any \( U \in \mathrm{SU}(2^n) \) and any \( \varepsilon > 0 \), there exist parameters \( \boldsymbol{\vartheta} \) such that the unitary \( \hat{U}(\boldsymbol{\vartheta}) \) satisfies
\begin{equation}
\left\| \hat{U}(\boldsymbol{\vartheta}) - U \right\|_\infty \le \varepsilon,
\end{equation}
with circuit depth exponential in \( n \) and polylogarithmic in \( 1/\varepsilon \). Moreover, if \( U \) admits a nearest-neighbor decomposition with \( \mathrm{poly}(n, \log(1/\varepsilon)) \) layers of two-qubit gates, the same accuracy can be achieved with depth \( D = \mathrm{poly}(n, \log(1/\varepsilon)) \).
\end{proposition}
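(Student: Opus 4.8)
The plan is to establish universality in two stages: first an exact reachability argument based on the dynamical Lie algebra generated by the circuit's gate set, and then a constructive synthesis that yields the claimed depth scaling. First I would identify the generating set: the one-parameter factors in Eqs.~\eqref{eq:1}--\eqref{eq:2} are generated by the Hermitian operators $\{\hat\sigma^x_j,\hat\sigma^z_j\}_{j=1}^n$ (single-qubit blocks) together with $\{\hat\sigma^x_j\hat\sigma^x_k,\hat\sigma^y_j\hat\sigma^y_k,\hat\sigma^z_j\hat\sigma^z_k\}$ for $(j,k)\in\mathrm{edges}(G)$. The key structural claim is that these close to the full Lie algebra $\mathfrak{su}(2^n)$ under commutation whenever $G$ is connected: on a single site $[\hat\sigma^z_j,\hat\sigma^x_j]\propto\hat\sigma^y_j$ closes $\mathfrak{su}(2)$; commuting a single-qubit generator into a two-body term, e.g.\ $[\hat\sigma^z_j,\hat\sigma^x_j\hat\sigma^x_k]\propto\hat\sigma^y_j\hat\sigma^x_k$, produces all weight-two Paulis on each edge; and nested commutators then propagate support along the connected graph to reach every Pauli string. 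I would either carry out this standard closure computation explicitly or cite the known universality of $\{R_x,R_z,R_{zz}\}$-type gate sets, the upshot being that the dynamical Lie algebra is all of $\mathfrak{su}(2^n)$.

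Given full-rank generation, the Lie-group controllability theorem guarantees that finite products of the native one-parameter subgroups have reachable set equal to the connected group $\mathrm{SU}(2^n)$, so every $U$ is attainable \emph{exactly}. To obtain a depth bound I would switch to an explicit synthesis: decompose a generic $U\in\mathrm{SU}(2^n)$ into $O(4^n)$ two-qubit gates via the quantum Shannon / cosine--sine decomposition, and realize each two-qubit $\mathrm{SU}(4)$ factor through its Cartan (KAK) form $(\text{local})\exp\bigl(i(a\,XX+b\,YY+c\,ZZ)\bigr)(\text{local})$. Because $XX$, $YY$, $ZZ$ on a fixed pair mutually commute, the central factor splits exactly into the three native two-qubit rotations, while the local $\mathrm{SU}(2)$ pieces come from $R_zR_xR_z$ Euler angles; thus each two-qubit gate costs $O(1)$ native gates. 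Gates on non-adjacent pairs are routed by native \textsc{swap}s (themselves $\propto\exp(i\tfrac{\pi}{4}(XX+YY+ZZ))$) along paths of length at most the graph diameter, giving the overall exponential depth. If $U$ instead admits a $\operatorname{poly}(n,\log(1/\varepsilon))$-layer nearest-neighbor decomposition, the identical per-gate synthesis applies with $O(1)$ routing on the ring or circulant, yielding $D=\operatorname{poly}(n,\log(1/\varepsilon))$.

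On accuracy, I would note that the Cartan construction is exact for continuous angles, so the $\varepsilon$-dependence is not forced by the two-qubit synthesis itself; the $\operatorname{polylog}(1/\varepsilon)$ factor enters only if one insists on a fixed discrete angle alphabet, in which case Solovay--Kitaev approximates each native rotation to accuracy $\varepsilon/(\#\text{gates})$ with $O(\log^c(1/\varepsilon))$ overhead, the errors adding sub-additively in operator norm. The hard part will not be any single step---each is standard---but assembling them into a clean, self-consistent statement: in particular, reconciling the \emph{exact} reachability (which would permit $\varepsilon=0$) with the advertised $\log(1/\varepsilon)$ scaling, and tracking the routing overhead for an arbitrary connected $G$ rather than only the ring. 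I would therefore phrase the accuracy clause conservatively and make explicit where, if anywhere, Solovay--Kitaev is genuinely invoked.
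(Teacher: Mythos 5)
Your proposal is correct, and its constructive core coincides with the paper's proof: both realize an arbitrary two-qubit $\mathrm{SU}(4)$ gate via the Cartan KAK form, using the fact that the native $XX$, $YY$, $ZZ$ rotations on a fixed edge commute (so their product is the exact exponential of the sum) and that the single-qubit factors are Euler $Z$--$X$--$Z$ rotations; both then route non-adjacent pairs with an $\mathcal O(n)$ SWAP network and invoke Solovay--Kitaev for the $\operatorname{polylog}(1/\varepsilon)$ factor, and both inherit the $\operatorname{poly}(n,\log(1/\varepsilon))$ depth when $U$ already admits a nearest-neighbor poly-depth decomposition. What you add is genuinely different in two places. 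First, the Lie-algebraic controllability preamble (closure of $\{\sigma^x_j,\sigma^z_j,\sigma^\alpha_j\sigma^\alpha_k\}$ to $\mathfrak{su}(2^n)$ on a connected graph) gives exact reachability in principle; the paper skips this and goes straight to the explicit construction, which is arguably cleaner since the construction alone already proves universality. Second, you obtain the exponential gate count from the quantum Shannon/cosine--sine decomposition with continuous angles, whereas the paper gets it by passing through the discrete universal set $\{\mathrm{CNOT},H,S,T\}$ and Solovay--Kitaev. Your closing observation is the most valuable divergence: with continuous parameters the synthesis is exact, so the $\log(1/\varepsilon)$ in the statement is only forced if one discretizes the angle alphabet --- a tension the paper's proof glosses over by invoking Solovay--Kitaev without noting that it is not strictly needed. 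Phrasing the accuracy clause conservatively, as you propose, would tighten the result relative to the paper's version.
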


\begin{figure}[b]
	\centering
	\includegraphics[width=6.5cm]{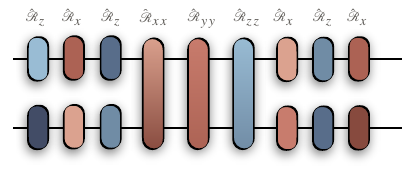}
	\caption{Cartan's KAK decomposition of arbitrary double-qubit unitary $U \in \operatorname{SU}(4)$. The diagram illustrates a constructive decomposition of a general two-qubit unitary into 15 parameterized gates: local Euler rotations and a central block of $XX$, $YY$, and $ZZ$ interactions.}
	\label{fig:KAK}
\end{figure}

\begin{proof}
    Consider any pair of adjacent qubits, denoted as $Q_j$ and $Q_k$. We can construct a layout with 15 parameterized gates in three consecutive layers, as depicted in Fig.~\ref{fig:KAK}. It is known that any single-qubit unitary operation can be achieved through consecutive rotations of Pauli-$Z-X-Z$ or Pauli-$X-Z-X$, up to a global phase. Moreover, the operation
\begin{equation}
\begin{aligned}
&\exp\left(-i\frac{\vartheta_{xx}}{2} \hat{\sigma}_j^x \hat{\sigma}_k^x\right)
\exp\left(-i\frac{\vartheta_{yy}}{2} \hat{\sigma}_j^y \hat{\sigma}_k^y\right) \exp\left(-i\frac{\vartheta_{zz}}{2} \hat{\sigma}_j^z \hat{\sigma}_k^z\right)
\end{aligned}
\end{equation}
    is equivalent to the operation
    \begin{equation}
        \exp(-i\frac{\vartheta_{xx}}{2} \hat{\sigma}_j^x\hat{\sigma}_k^x-i\frac{\vartheta_{yy}}{2} \hat{\sigma}_j^y\hat{\sigma}_k^y-i\frac{\vartheta_{zz}}{2} \hat{\sigma}_j^z\hat{\sigma}_k^z).
    \end{equation}
    Consequently, we can employ Cartan's KAK decomposition~\cite{Tucci2005An}:
    \begin{equation}
    \begin{aligned}
        \hat{U}(\boldsymbol{\vartheta}) = &\hat{U}_s(\vartheta_1,\vartheta_2,\vartheta_3)\otimes \hat{U}_s(\vartheta_4,\vartheta_5,\vartheta_6) \exp(-i\frac{\vartheta_7}{2} \hat{\sigma}_j^x\hat{\sigma}_k^x-i\frac{\vartheta_8}{2} \hat{\sigma}_j^y\hat{\sigma}_k^y-i\frac{\vartheta_9}{2} \hat{\sigma}_j^z\hat{\sigma}_k^z)\\
        &\cdot \hat{U}_s(\vartheta_{10},\vartheta_{11},\vartheta_{12})\otimes \hat{U}_s(\vartheta_{13},\vartheta_{14},\vartheta_{15}),
    \end{aligned}
    \end{equation}
    where $\hat{U}_s(\vartheta_\alpha,\vartheta_\beta,\vartheta_\gamma)$ represents a 3-parameter universal single-qubit unitary acting on $Q_j$ or $Q_k$. Any two-qubit unitary operation $\hat{U} \in \operatorname{SU}(4)$ can be decomposed into the KAK form. As a result, we can implement arbitrary 1-qubit and 2-qubit gates, such as the universal gate set $\{\text{CNOT}, \text{H}, \text{S}, \text{T}\}$. 
    
    A nearest-neighbour SWAP network of depth \(\mathcal{O}(n)\) brings any two qubits adjacent. Applying the Solovay–Kitaev theorem~\cite{Kitaev1997Quantum} then ensures an approximation within \(\varepsilon\), with an overall circuit depth exponential in \(n\) and polylogarithmic in \(1/\varepsilon\).

    If \(U\) already admits a nearest-neighbour decomposition with \(\mathrm{poly}(n, \log(1/\varepsilon))\) layers, the same reasoning directly yields depth \(D = \mathcal{O}(\mathrm{poly}(n,\log(1/\varepsilon)))\).
\end{proof}

\section{Level-statistics confirmation of CUE}\label{Appendix:CUE}
To determine the appropriate random-matrix ensemble characterizing the thermal phase of our Floquet-initialized circuits, we conduct a level-statistics analysis of the quasienergies. Starting from Poisson (MBL) statistics at small kick strength $W$, the spectra change towards the circular unitary ensemble (CUE) as $W$ increases, signaling ergodicity and level repulsion.

\begin{figure}[b]
\centering
\includegraphics[width=13cm]{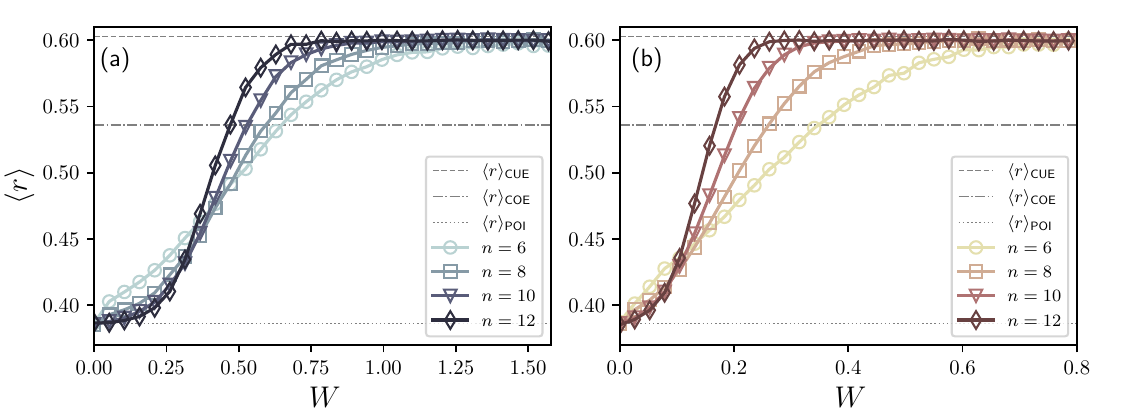}
\caption{Disorder‑averaged adjacent‑gap ratio \(\langle r \rangle\) versus kick strength \(W\) for system sizes $n=6,8,10,12$. \textbf{(a)} One-dimensional ring connectivity. \textbf{(b)} Circulant graph $\mathrm{Ci}_{n}(1,2)$. Horizontal dashed lines (from top to bottom) mark the universal CUE ($\langle r\rangle\simeq0.603$), COE ($\langle r\rangle\simeq0.536$), and Poisson ($\langle r\rangle\simeq0.386$) benchmarks. The crossover of the numerical curves from the Poisson plateau at small $W$ to the CUE plateau at large $W$ signals the many-body-localization-to-thermalization transition; data for larger $n$ (darker shades) sharpen this crossover, consistent with the main-text diagnostics. Each point is obtained by averaging over 200 independent disorder realizations.}
\label{fig:level-stats}
\end{figure}

The principal diagnostic is the disorder‑averaged adjacent‑gap ratio
\begin{equation}
\langle r\rangle
:= \bigl\langle
\frac{\min(\delta_i,\delta_{i+1})}{\max(\delta_i,\delta_{i+1})}
\bigr\rangle
\end{equation}
where $\delta_i$ represents the spacing between consecutive quasienergy levels (eigenphases of the Floquet unitary). For Poisson statistics, $\langle r \rangle \approx 0.386$; for the circular orthogonal ensemble (COE), $\langle r \rangle \approx 0.536$; and for the circular unitary ensemble (CUE), $\langle r \rangle \approx 0.603$.

For $n=6,8,10,12$ qubits we diagonalized the Floquet unitary in both the one‑D ring and the circulant graph $\mathrm{Ci}_n(1,2)$ topologies.  
Fig.~\ref{fig:level-stats} shows $\langle r\rangle$ versus~$W$:
all curves cross over from the Poisson plateau at small~$W$ to the CUE plateau at large~$W$, with the transition sharpening as $n$ grows. In the thermal regime we obtain $\langle r\rangle\simeq0.60$, fully consistent with CUE and incompatible with COE. This behaviour is expected because our ansatz employs
non‑commuting rotations with complex phases, thereby breaking
time‑reversal symmetry and generating the full Lie algebra
$\mathfrak{su}(2^n)$. In this work, we pinpoint the MBL--thermalization phase transition using metrics such as the inverse participation ratio, entanglement entropy, and low-weight stabilizer Rényi entropy (as detailed in the main text), rather than level statistics. These measures are more closely linked to variational quantum algorithm (VQA) trainability, entanglement scaling, and localization properties, and prove computationally efficient for the studied system sizes, circumventing the requirement for full spectral diagonalization.

Even if the thermal phase were governed by COE (e.g., for real-valued gates yielding unitaries in the orthogonal group $\mathcal{O}(d)$ with $d=2^n$), our $t$-design-based diagnostics would hold. The moments of COE and CUE are asymptotically equivalent, with discrepancies manifesting only as sub-leading terms in the Weingarten expansion~\cite{Collins2006Integration}. Concretely, for fixed order $t$ and large $d$,
\begin{equation}
\begin{aligned}
   &\int_{\mathrm O(d)}  O^{\otimes t}  \otimes O^{\dagger\otimes t} d\mu_{\rm O}(O)
    = 
   \int_{\mathrm U(d)}  U^{\otimes t}  \otimes U^{\dagger\otimes t} d\mu_{\rm U}(U)
   + \mathcal O  \left(\frac{1}{d}\right),
\end{aligned}
\end{equation}
where $\mu_{\rm O}$ and $\mu_{\rm U}$ denote the respective Haar measures. Thus, applying a COE unitary to a fixed computational basis state generates an orthogonal $t$-design, with differences vanishing for large systems and absent from our normalized metrics.

\section{Proof of Theorem~\ref{theorem:IPR}}\label{theoremproof:IPR}

\begin{proof}
Let $|\psi_\mathrm{in}\rangle$ be an arbitrary $n$-qubit input state, and consider the parametrized quantum state
$|\psi(\boldsymbol{\vartheta})\rangle=\hat U(\boldsymbol{\vartheta})|\psi_\mathrm{in}\rangle$.
Since the ensemble $\{\hat U(\boldsymbol{\vartheta})\}_{\vartheta\in\Theta}$ forms a unitary
$t$-design, its $t$-fold moments reproduce those of the Haar measure.
Consequently, for every basis vector $|\beta\rangle$, we have
\begin{equation}\label{eq:haar-moment}
  \int_\Theta
|\langle\beta|\psi(\boldsymbol{\vartheta})\rangle|^{2t}d\boldsymbol{\vartheta}
  =
  \mathbb E_{|\phi\rangle\sim\mathrm{Haar}}
     \bigl[|\langle\beta|\phi\rangle|^{2t}\bigr].
\end{equation}
It is known that, for a Haar-random state \(|\phi\rangle\), the moment on the right-hand side equals the inverse of the dimension of the totally symmetric subspace \(\vee^t \mathbb{C}^{2^n}\)~\cite{Harrow2013Church, Mele2024Introduction}. Specifically,
\begin{equation}
  \dim \bigl(\vee^{t} \mathbb C^{2^{n}}\bigr)=\binom{2^{n}+t-1}{t}
\end{equation}
and
\begin{equation}
\begin{aligned}
  \mathbb E_{|\phi\rangle\sim{\rm Haar}}
     \bigl[|\langle\beta|\phi\rangle|^{2t}\bigr]
  =&\mathbb E_{|\phi\rangle\sim{\rm Haar}}
     \bigl[\langle\beta|^{\otimes t}|\phi\rangle^{\otimes t}
            \langle\phi|^{\otimes t}|\beta\rangle^{\otimes t}\bigr]\\
  =&\langle\beta|^{\otimes t}
      \bigl[\mathbb E_{|\phi\rangle\sim{\rm Haar}}
            \bigl(|\phi\rangle^{\otimes t}  \langle\phi|^{\otimes t}\bigr)\bigr]
     |\beta\rangle^{\otimes t} \\=& 
     \frac{\langle\beta|^{\otimes t}\Pi_{\rm sym}|\beta\rangle^{\otimes t}}
          {\dim  \bigl(\vee^{t}  \mathbb C^{2^{n}}\bigr)}
      \\=&\frac{1}{\dim  \bigl(\vee^{t}  \mathbb C^{2^{n}}\bigr)},
\end{aligned}
\end{equation}
where \(\Pi_{\rm sym}\) denotes the projector onto the totally symmetric subspace. Substituting this expression into Eq.~\eqref{eq:haar-moment} and summing over the complete computational basis $\{|\beta_j\rangle\}_{j=1}^{2^{n}}$ we
obtain
\begin{equation}
\begin{aligned}
\int_\Theta \operatorname{IPR}_t \bigl(|\psi(\boldsymbol{\vartheta})\rangle\bigr)
   d\boldsymbol{\vartheta}
   =
   \frac{2^{n}}{\dim \bigl(\vee^{t} \mathbb C^{2^{n}}\bigr)}
   =
   \frac{2^{n}! \,  t!}{\bigl(t+2^{n}-1\bigr)!}.
\end{aligned}
\end{equation}
Setting $t=2$ immediately yields
\begin{equation}
\dim \bigl(\vee^{2} \mathbb C^{2^{n}}\bigr)
     =\binom{2^{n}+1}{2}
     =\frac{2^{n}(2^{n}+1)}{2}.
\end{equation}
Hence
\begin{equation}
    \int_{\Theta} \text{IPR}_2(|\psi(\boldsymbol{\vartheta})\rangle)d\boldsymbol{\vartheta} = \frac{2}{(2^n+1)}.
\end{equation}
\end{proof}

\section{State 2-Design and Circuit Expressibility}\label{Sec: Frame Potential}
As previously noted in Section~\ref{Sec:MBLTransition}, although the inverse participation ratio ($\text{IPR}_t$) can suggest whether a quantum state ensemble deviates from forming a $t$-design, it is not solely sufficient for such a determination. To provide further evidence that, in the thermal phase, the output-state ensemble approaches a state $t$-design (with respect to the Haar measure on pure states), we investigate the frame potential, defined for $t$-th order moments as:
\begin{equation}
\mathcal{F}_t = \int_{\Theta} \int_{\Theta} \left|\left\langle \psi(\boldsymbol{\vartheta}_1) \mid \psi(\boldsymbol{\vartheta}_2)\right\rangle\right|^{2t} d\boldsymbol{\vartheta}_1 d\boldsymbol{\vartheta}_2.
\end{equation}
An ensemble qualifies as a state $t$-design if its frame potential matches the Haar measure frame potential, known as the Welch bounds:
\begin{equation}
\mathcal{F}_t = \mathcal{F}^{(\mathrm{Haar})}_t := \frac{1}{\dim \bigl(\vee^{t} \mathbb C^{2^{n}}\bigr)} = \frac{1}{\binom{t + 2^n - 1}{t}}.
\end{equation}

\begin{figure*}[b]
    \centering
    \includegraphics[width=16cm]{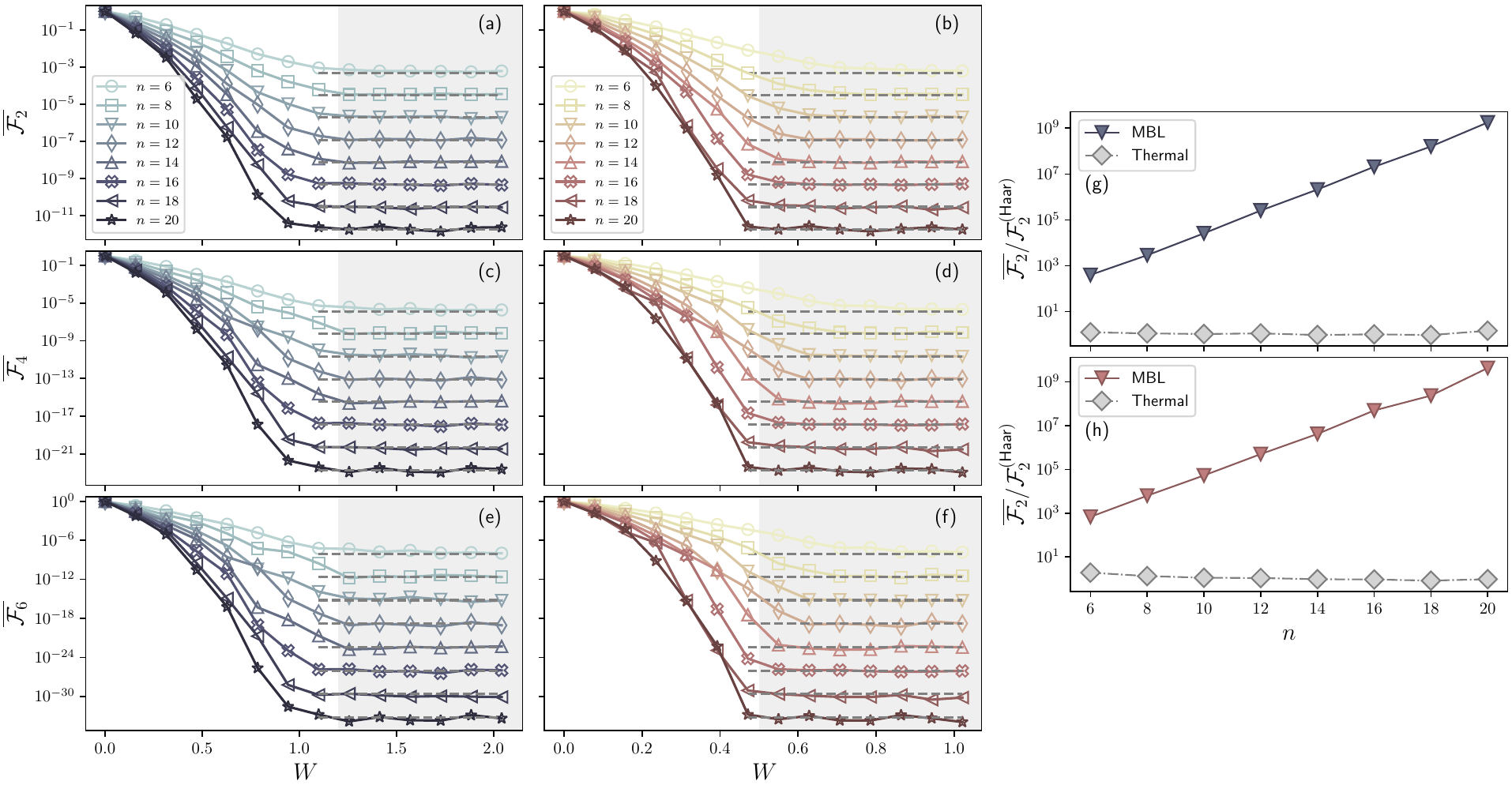}
    \caption{Frame potentials of output states across different orders as a function of kick strength $W$ for various lattice topologies, demonstrating how close the ensembles approach state $t$-designs. \textbf{(a, c, e)} display results for the 1D ring topology. \textbf{(b, d, f)} show results for the  $\mathrm{Ci}_{n}(1,2)$ lattice topology. Each panel presents frame potentials of different orders: \textbf{(a, b)} second-order, \textbf{(c, d)} fourth-order, and \textbf{(e, f)} sixth-order. Data averaged over 100 samples, with grey dashed lines indicating the Welch bounds~\cite{Datta2012Geometry}. \textbf{(g)} and \textbf{(h)} illustrate the ratios between the computed frame potentials and the corresponding Haar-random frame potentials for the 1D ring and the $\mathrm{Ci}_{n}(1,2)$ lattice, respectively.}
    \label{fig:FP transition}
\end{figure*}

For a state 2-design in particular, this criterion simplifies to:
\begin{equation}
\mathcal{F}_2 = \mathcal{F}^{(\mathrm{Haar})}_2 := \frac{2}{2^n(2^n + 1)}.
\end{equation}
We calculate the average frame potentials $\mathcal{F}_2$, $\mathcal{F}_4$, and $\mathcal{F}_6$ across various values of $W$, and present these findings in Fig.~\ref{fig:FP transition} (a-f). In the thermal phase, these averaged frame potentials accurately reflect their theoretical Haar values, substantiating the ensemble's transition to state designs.

Additionally, the ratio between $\mathcal{F}_2$ and $\mathcal{F}^{(\mathrm{Haar})}_2$ is indicative of the circuit's state-dependent expressibility~\cite{Holmes2022Expressibility}, defined as:
\begin{equation}
\begin{aligned}
\varepsilon_t(|\psi_t\rangle\langle\psi_t|) := 
\left\| \int_{\Theta} \hat{U}(\boldsymbol{\vartheta})^{\otimes t}
(|\psi_t\rangle\langle\psi_t|^{\otimes t}) 
\hat{U}^\dagger(\boldsymbol{\vartheta})^{\otimes t}  d\boldsymbol{\vartheta} \right. 
\left. - \int_{\mathrm{U}(2^n)} \hat{U}^{\otimes t} 
(|\psi_t\rangle\langle\psi_t|^{\otimes t}) 
\hat{U}^{\dagger \otimes t}  d\mu_{\mathrm H}(\hat{U}) \right\|_2.
\end{aligned}
\end{equation}
Here, our trial state $|\psi_t\rangle$ is sampled from the computational basis. Larger ratios suggest reduced expressibility of the ansatz, whereas ratios approaching 1 indicate maximal expressibility associated with strong barren plateaus. Fig.~\ref{fig:FP transition} (g) shows these ratios for the MBL phase ($W = \pi/10$) and the thermal phase ($W = 9\pi/20$) in the 1D ring topology, and Fig.~\ref{fig:FP transition} (h) for the MBL phase ($W = \pi/20$) and the thermal phase ($W = 9\pi/40$) in the $\mathrm{Ci}_{n}(1,2)$ lattice topology. In the MBL phase, these ratios exponentially increase with the system size $n$, indicating limited expressibility, whereas in the thermal phase, they remain constant near 1, indicative of maximal expressibility.

\section{Proof of Theorem~\ref{theorem:SE}}\label{theoremproof:SE}

Throughout this appendix we denote \(N=2^n\) and always work with the \(q:=2t\)‑th moment of overlaps.

\begin{proof}
Let \(\hat O\) be a traceless Hermitian operator with eigenvalues \(\pm\lambda\). Choose an
eigenbasis
\(\bigl\{|p_j\rangle,|m_j\rangle\bigr\}_{j=1}^{N/2}\) such that
\begin{equation}
    \hat O|p_j\rangle=+\lambda|p_j\rangle,
\end{equation}
and
\begin{equation}
\hat O|m_j\rangle=-\lambda|m_j\rangle.
\end{equation}
For a pair of multi–indices  
\(\boldsymbol\alpha=(\alpha_1,\dots,\alpha_{q})\in\{1,\dots,N/2\}^{q}\)  
and  
\(\boldsymbol\sigma=(\sigma_1,\dots,\sigma_{q})\in\{+1,-1\}^{q}\)  
define
\begin{equation}
  |v_\ell\rangle:=
  \begin{cases}
    |p_{\alpha_\ell}\rangle,&\sigma_\ell=+1,\\[2pt]
    |m_{\alpha_\ell}\rangle,&\sigma_\ell=-1,
  \end{cases}
\quad
  \sgn(\boldsymbol\sigma):=\prod_{\ell=1}^{q}\sigma_\ell.
\end{equation}
We say that a pair \((\boldsymbol\alpha, \boldsymbol\sigma)\) is weakly ordered if the indices \(\boldsymbol\alpha\) satisfy
\begin{equation}
\alpha_1 \le \alpha_2 \le \cdots \le \alpha_q.
\end{equation}
For every weakly–ordered pair $(\boldsymbol\alpha,\boldsymbol\sigma)$ put
\begin{equation}
\begin{aligned}
      \hat Q_{\boldsymbol\alpha,\boldsymbol\sigma}
      :=&|v_1\rangle \langle v_1|  \otimes  \cdots  \otimes  
        |v_q\rangle \langle v_q|,\\
      \quad 
      &n_j:=\bigl|\{\ell:\alpha_\ell=j\}\bigr|.
\end{aligned}
\end{equation}
Because the operator $\hat O^{\otimes q}$ is totally symmetric, the following weighted sum reproduces it exactly:
\begin{equation}
\label{eq:O-decomposition-correct}
   \hat O^{\otimes q}
    = 
   \lambda^{q}
   \sum_{\substack{\text{weakly ordered}\\(\boldsymbol\alpha,\boldsymbol\sigma)}}
      \frac{q!}{\displaystyle\prod_j n_j!} 
      \sgn(\boldsymbol\sigma) 
      \hat Q_{\boldsymbol\alpha,\boldsymbol\sigma},
\end{equation}

To clarify the notation used in the proof, let's consider a simple example. Let our system be a single qubit ($n=1$, so the Hilbert space dimension is $N=2^1=2$) and let the operator be $\hat{O} = \hat{\sigma}_z$. The eigenvalues are $\lambda = \pm 1$. The corresponding eigenspaces, indexed by $j \in \{1, \dots, N/2=1\}$, are spanned by:
\begin{itemize}
    \item Positive eigenspace ($j=1$): $\ket{p_1} = \ket{0}$ (for eigenvalue $+1$)
    \item Negative eigenspace ($j=1$): $\ket{m_1} = \ket{1}$ (for eigenvalue $-1$)
\end{itemize}
We will examine the second moment, which corresponds to setting $t=1$ and thus $q=2t=2$. The multi-index for the eigenspace, $\boldsymbol{\alpha} = (\alpha_1, \alpha_2)$, must be $(1,1)$ since $j$ can only be $1$. The sign multi-index $\boldsymbol{\sigma}=(\sigma_1, \sigma_2)$ can vary, leading to different operators $\hat{Q}_{\boldsymbol{\alpha},\boldsymbol{\sigma}}$.
\begin{itemize}
    \item \textbf{Case 1:} $\boldsymbol{\sigma} = (+, +)$
    \begin{itemize}
        \item The state vectors are $\ket{v_1} = \ket{p_1} = \ket{0}$ and $\ket{v_2} = \ket{p_1} = \ket{0}$.
        \item The operator is $\hat{Q}_{(1,1), (+,+)} = \ket{0}\bra{0} \otimes \ket{0}\bra{0} = \ket{00}\bra{00}$.
        \item The sign contribution is $\text{sgn}(\boldsymbol{\sigma}) = (+1)(+1) = +1$.
    \end{itemize}
    \item \textbf{Case 2:} $\boldsymbol{\sigma} = (+, -)$
    \begin{itemize}
        \item The state vectors are $\ket{v_1} = \ket{p_1} = \ket{0}$ and $\ket{v_2} = \ket{m_1} = \ket{1}$.
        \item The operator is $\hat{Q}_{(1,1), (+,-)} = \ket{0}\bra{0} \otimes \ket{1}\bra{1} = \ket{01}\bra{01}$.
        \item The sign contribution is $\text{sgn}(\boldsymbol{\sigma}) = (+1)(-1) = -1$.
    \end{itemize}
\end{itemize}
The other two cases, $\boldsymbol{\sigma} = (-,+)$ and $\boldsymbol{\sigma} = (-,-)$, are analogous. The full expansion of $(\hat{\sigma}_z)^{\otimes 2}$ is given by:
\begin{equation}
\begin{aligned}
    (\hat{\sigma}_z)^{\otimes 2} =& \left(\ket{0}\bra{0} - \ket{1}\bra{1}\right) \otimes \left(\ket{0}\bra{0} - \ket{1}\bra{1}\right) \\
    =& \ket{00}\bra{00} - \ket{01}\bra{01} - \ket{10}\bra{10} + \ket{11}\bra{11}
\end{aligned}
\end{equation}

Let \(|\psi\rangle\) be a Haar-random state on \(\mathbb C^{N}\).  Since  
\begin{equation}
   \mathbb E_{\psi}\bigl[|\psi\rangle   \langle\psi|^{\otimes q}\bigr]
   =\frac{\Pi_{\mathrm{sym}}
     }{\binom{N+q-1}{q}},
\end{equation}
with \(\Pi_{\mathrm{sym}}\) the projector onto the totally symmetric
subspace \(\vee^{q}   \mathbb C^{N}\), it follows that from
\eqref{eq:O-decomposition-correct}
\begin{equation}
\begin{aligned}
   \mathbb E_{\psi}   
   \bigl[\langle\psi|\hat O|\psi\rangle^{q}\bigr]
   =&
   \frac{\lambda^{q}}{\binom{N+q-1}{q}}
   \sum_{\substack{\text{weakly ordered}\\(\boldsymbol\alpha,\boldsymbol\sigma)}}
        \frac{q!}{\prod_{j}n_{j}!}  
        \sgn(\boldsymbol\sigma) 
        \operatorname{Tr}\bigl(
            \Pi_{\mathrm{sym}}    
            \hat Q_{\boldsymbol\alpha,\boldsymbol\sigma}\bigr)
\end{aligned}
\end{equation}
The overlap with \(\Pi_{\mathrm{sym}}\) depends only on how many signs in \(\boldsymbol\sigma\) are negative; a
short combinatorial count yields
\begin{equation}
\label{eq:Haar-moment-final}
   \mathbb E_{\psi}   
   \bigl[\langle\psi|\hat O|\psi\rangle^{q}\bigr]
     =  
   \lambda^{q}    
   \frac{\binom{N/2+t-1}{t}}{\binom{N+q-1}{q}} .
\end{equation}
For every non‑identity Pauli
\(\hat P\in\mathcal P_{n,k}\setminus\{\openone\}\) we have
\(\lambda=1\).
Replacing \(|\psi\rangle\) by
\(
   |\psi(\boldsymbol\vartheta)\rangle
   :=\hat U(\boldsymbol\vartheta)|\psi_{\mathrm{in}}\rangle
\)
and averaging over the unitary \(q\)-design
\(\{\hat U(\boldsymbol\vartheta)\}_{\boldsymbol\vartheta\in\Theta}\)
gives, by \eqref{eq:Haar-moment-final},
\begin{equation}
   \int_{\Theta}
      \bigl\langle\psi(\boldsymbol\vartheta)\bigl|\hat P
         \bigr|\psi(\boldsymbol\vartheta)\bigr\rangle^{q}
       \mathrm d\boldsymbol\vartheta
   =\frac{\binom{N/2+t-1}{t}}{\binom{N+q-1}{q}} .
\end{equation}
For the identity operator the same integral equals \(1\).
Splitting the sum over the low‑weight Pauli set
\(\mathcal P_{n,k}\) into identity plus
\(\operatorname{card}(\mathcal{P}_{n,k})-1\) traceless strings we arrive at
\begin{equation}
\begin{aligned}
    \int_{\Theta} \sum_{\hat{P} \in \mathcal{P}_{n,k}} \frac{|\langle\psi(\boldsymbol{\vartheta})|\hat{P}| \psi(\boldsymbol{\vartheta})\rangle|^q}{\operatorname{card}(\mathcal{P}_{n,k})} d\boldsymbol{\vartheta}&=\frac{1}{\operatorname{card}(\mathcal{P}_{n,k})} + \frac{(\operatorname{card}(\mathcal{P}_{n,k})-1)}{\operatorname{card}(\mathcal{P}_{n,k})}\frac{\binom{2^{n-1} +q/2 - 1}{q/2}}{\binom{2^{n} +q - 1}{q}} \\&= \frac{\binom{2^{n} +q - 1}{q} - \binom{2^{n-1} +q/2 - 1}{q/2}}{\operatorname{card}(\mathcal{P}_{n,k})\binom{2^{n} +q - 1}{q}}+\frac{\binom{2^{n-1} +q/2 - 1}{q/2}}{\binom{2^{n} +q - 1}{q}}.
\end{aligned}
\end{equation}
\end{proof}

\section{Dynamics of Deep Circuits}\label{Sec: Power Spectrum}

\begin{figure*}[t]
    \centering
    \includegraphics[width=16.3cm]{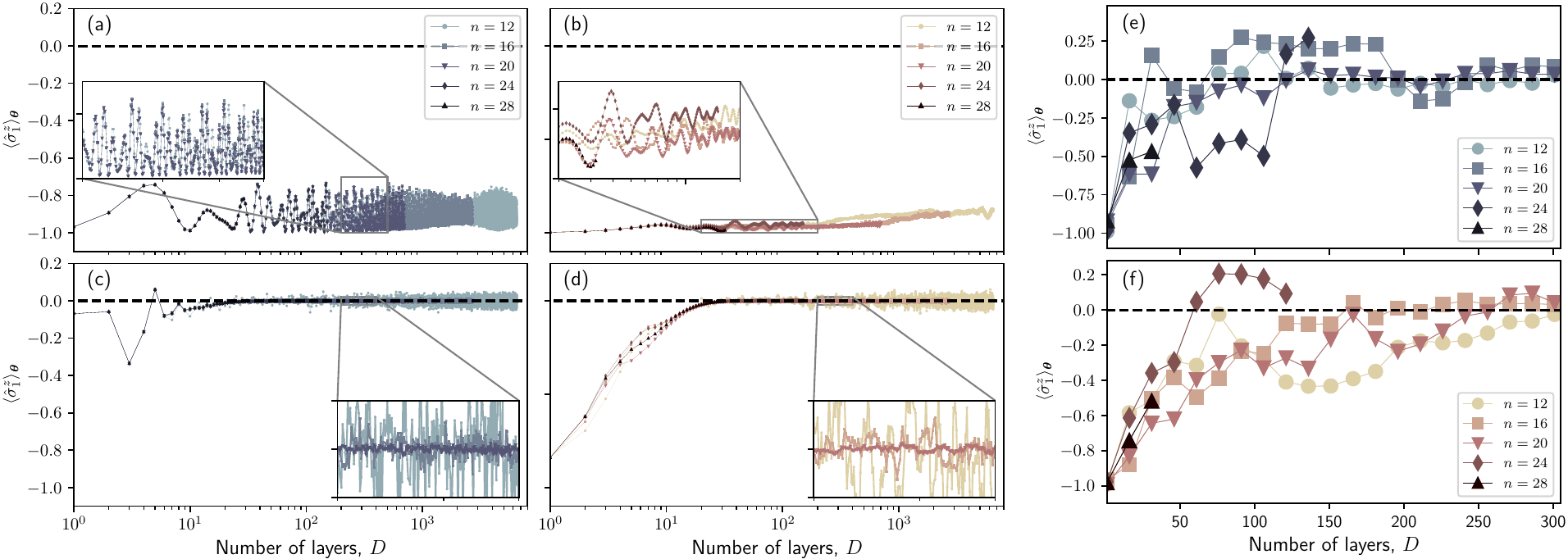}
    \caption{Demonstration of many-body localization within a circuit model, showcasing the output expectation values from deep variational quantum circuits across different phases and initializations. \textbf{(a, c)} Illustrate results for a 1D ring topology under MBL conditions with kick strengths $W = 1/5$ and $W = 7/5$ respectively, highlighting the preservation of local information even at greater circuit depths. \textbf{(b, d)} Display outcomes for a $\mathrm{Ci}_{n}(1,2)$ lattice topology with kick strengths $W = 1/10$ and $W = 7/10$ respectively, reflecting the dynamics across both MBL and thermal phases. \textbf{(e, f)} Compare these with results from non-Floquet-initialized circuits for the same topologies but with each parameter independently sampled from narrow ranges $[-1/5, 1/5]$ and $[-1/10, 1/10]$, respectively. These panels demonstrate a rapid loss of local information within about 100 layers, starkly contrasting with the robustness observed in the Floquet-initialized systems depicted in panels (a-d).}
    \label{fig:deep circuit}
\end{figure*}

In this section, we employ our Floquet initialization strategy while substantially extending the circuit depth to investigate the dynamics of deep quantum circuits. We examine the output expectation values of Pauli $\hat{\sigma}^z_j$ on the first qubit over a range of circuit depths $D$. We utilize both many-body localization (MBL) and thermal initializations to explore their distinct dynamics. Specifically, we apply a kick strength of $W = 0.2$ for the 1D ring topology and $W = 0.1$ for the $\mathrm{Ci}_{n}(1,2)$ lattice topology under MBL conditions, and set the strengths to $W = 1.4$ for the 1D ring and $W = 0.7$ for the $\mathrm{Ci}_{n}(1,2)$ under thermal conditions. The results are illustrated in Fig.~\ref{fig:deep circuit}(a-d). Remarkably, the local information in the MBL phase remains robustly preserved even beyond 1000 layers, underscoring the effectiveness of MBL dynamics in protecting initial system information within extremely deep circuits. This preservation sharply contrasts with the outcomes from circuits without Floquet initialization, where local information typically diminishes as circuit depth increases. We provide a comparative analysis by showing results from circuits with non-Floquet initialization, where each parameter is independently sampled from a narrow range of $[-0.2, 0.2]$ for the 1D ring and $[-0.1, 0.1]$ for the $\mathrm{Ci}_{n}(1,2)$ lattice. The results, depicted in Fig.~\ref{fig:deep circuit}(e,f), demonstrate that the initial local information is entirely lost after approximately 100 iterations. Despite significant fluctuations observed over extended periods, the MBL phase distinctly demonstrates its capacity to mitigate the loss of local information, reflecting the characteristic resilience of standard many-body localized systems.

Without loss of generality, we focus on the fluctuations of the output expectation values for circuit depths \(D\) ranging from 500 to 1500, using \(N=1000\) data points in a 16-qubit system. Initially, we apply linear regression to detrend the original data:
\begin{equation}
    y_D = \langle\hat{\sigma}_1^z\rangle_D - \overline{\langle \hat{\sigma}_1^z\rangle} - f(D),
\end{equation}
where \(f(D)\) represents the linear regression function fitted to the data. To mitigate spectral leakage in the power spectral density (PSD) analysis~\cite{Youngworth2005An}, we employ a window function:
\begin{equation}
    \tilde{y}_D = y_D \cdot w_D,
\end{equation}
where \(w_D\) is the Hanning window, defined as:
\begin{equation}
    w_D = \frac{1}{2} \left(1 - \cos\left(\frac{2 \pi D}{N - 1}\right)\right).
\end{equation}

Subsequently, we calculate the discrete Fourier transform of the windowed data using the fast Fourier transform algorithm:
\begin{equation}
      z_k = \sum_{D=0}^{N-1} \tilde{y}_D \cdot e^{-i \frac{2\pi k D}{N}},
\end{equation}
where \(  z_k\) represents the \(k\)-th element of the Fourier transform. The PSD, \(\textsc{psd}_k\), is determined by taking the square of the absolute value of \(  z_k\) and normalizing by the number of data points:
\begin{equation}
    \textsc{psd}_k = \frac{|  z_k|^2}{N}.
\end{equation}
The PSD is then normalized by dividing each element by the total power, which is the sum of all PSD values:
\begin{equation}
    \widetilde{\textsc{psd}}_k = \frac{\textsc{psd}_k}{\sum_{j=0}^{N/2} \textsc{psd}_j}.
\end{equation}
~\\
The resulting normalized spectrum, \(\widetilde{\textsc{psd}}_k\), illustrated in Fig.~\ref{fig: Power Spectrum}, reveals distinct characteristics of the circuit dynamics across different phases. In the MBL phase, the power spectrum displays several prominent peaks, indicative of multiple dominant frequency components. This suggests that the circuit's dynamical evolution is characterized by various periodic patterns with distinct frequencies. Conversely, in the thermal phase, the power spectrum exhibits a more uniform distribution of frequencies with fewer dominant components. This notable difference in spectral characteristics between the MBL and thermal phases provides a compelling diagnostic tool for characterizing and distinguishing the dynamical behavior of variational quantum circuits in different operational phases.

\begin{figure*}[t]
    \centering
    \includegraphics[width=13cm]{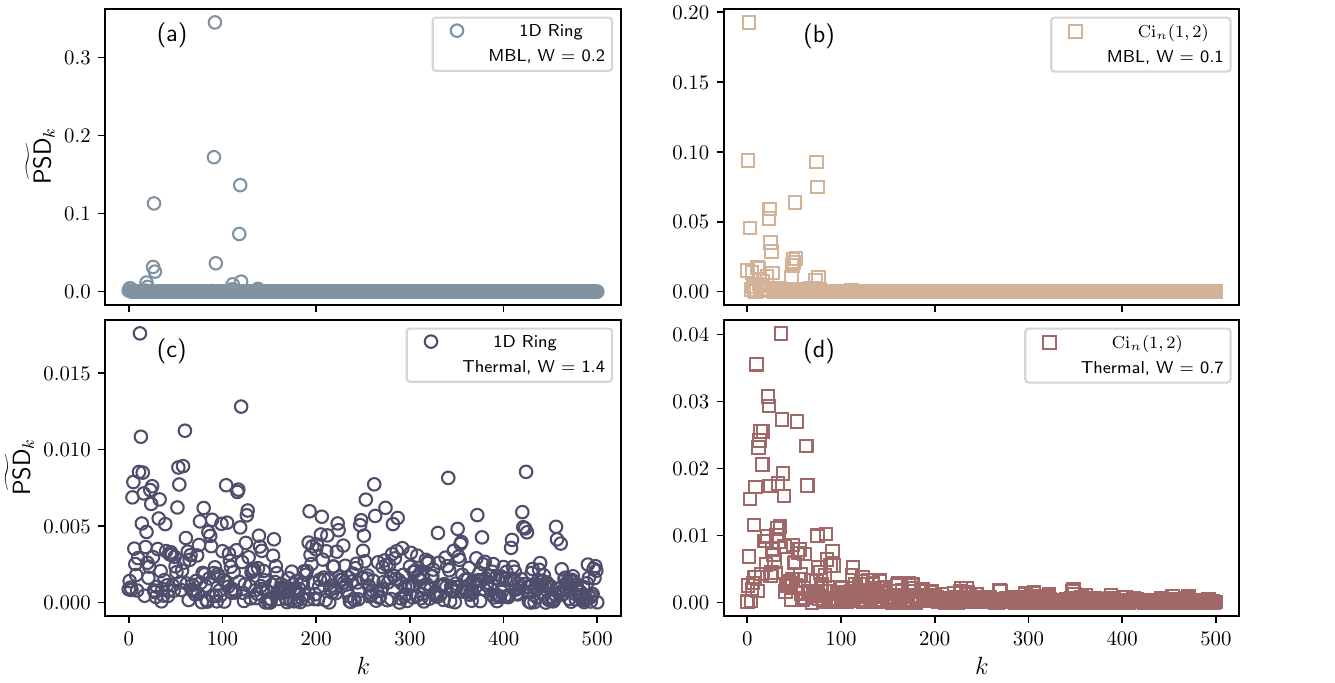}
    \caption{Power spectrum analysis of the output state expectation value fluctuations for $\hat{\sigma}^z_1$ in a 16-qubit system, with circuit depth $D$ varying from 100 to 500. This analysis continues using our Floquet initialization strategy, extending the circuit depth significantly. Panels \textbf{(a, c)} illustrate results for a 1D ring topology with kick strengths $W = 0.2, 1.4$, respectively; panels \textbf{(b, d)} display the $\mathrm{Ci}_{n}(1,2)$ lattice topology with kick strengths $W = 0.1, 0.7$, respectively.}
    \label{fig: Power Spectrum}
\end{figure*}

\section{Connecting the Gradient's \texorpdfstring{$\ell_\infty$}{l-infinity}-Norm to Component Variance}
\label{Appendix:inftynorm_variance}
\begin{proposition}\label{prop:inftynorm_variance}
Let $\boldsymbol{\vartheta}\in\mathbb{R}^{D(m_s + m_k)}$ be drawn from a distribution $\mathcal{D}$, and denote the gradient of the expectation value of the loss Hamiltonian by
\begin{equation}
\nabla  \langle\hat{\mathcal{H}}\rangle_{\boldsymbol\vartheta}
= \bigl(
\partial_{\vartheta_1}\langle\hat{\mathcal{H}}\rangle_{\boldsymbol\vartheta},
\dots,
\partial_{\vartheta_{D(m_s + m_k)}}\langle\hat{\mathcal{H}}\rangle_{\boldsymbol\vartheta}
\bigr).
\end{equation}
Assume the gradient vanishes on average, a standard premise in the analysis of barren plateaus:
\begin{equation}
\mathbb{E}_{\boldsymbol\vartheta\sim\mathcal{D}} \bigl[\nabla  \langle\hat{\mathcal{H}}\rangle_{\boldsymbol\vartheta}\bigr] = \boldsymbol{0}.
\end{equation}
If there exist constants $\alpha, p > 0$ such that
\begin{equation}
\Pr_{\boldsymbol\vartheta\sim\mathcal{D}} \left[
  \bigl\|\nabla  \langle\hat{\mathcal{H}}\rangle_{\boldsymbol\vartheta}\bigr\|_\infty \ge \alpha
\right] \ge p,
\end{equation}
then there exists at least one component $j \in \{1,\dots,D(m_s + m_k)\}$ such that
\begin{equation}
\operatorname{Var}_{\boldsymbol\vartheta\sim\mathcal{D}} 
\left[\partial_{\vartheta_j}\langle\hat{\mathcal{H}}\rangle_{\boldsymbol\vartheta}\right]
\ge \frac{p\alpha^2}{D(m_s + m_k)}.
\end{equation}
In particular, if both $p$ and $\alpha$ scale inverse-polynomially with the number of qubits~$n$, the gradient variance cannot decay faster than polynomially in~$n$, ruling out exponential barren-plateau behavior.
\end{proposition}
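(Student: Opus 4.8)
The plan is to reduce the statement to a second-moment computation followed by a pigeonhole step. Write $M := D(m_s+m_k)$ for the number of parameters and abbreviate the gradient components by $g_j := \partial_{\vartheta_j}\langle\hat{\mathcal{H}}\rangle_{\boldsymbol\vartheta}$. Because the gradient is assumed to have zero mean, each component satisfies $\operatorname{Var}[g_j] = \mathbb{E}[g_j^2]$, so it suffices to lower-bound the largest second moment $\max_j \mathbb{E}[g_j^2]$ and then invoke averaging.

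The key observation is the elementary norm inequality $\|\nabla\langle\hat{\mathcal{H}}\rangle_{\boldsymbol\vartheta}\|_2^2 = \sum_j g_j^2 \ge \|\nabla\langle\hat{\mathcal{H}}\rangle_{\boldsymbol\vartheta}\|_\infty^2$. On the event $\{\|\nabla\langle\hat{\mathcal{H}}\rangle_{\boldsymbol\vartheta}\|_\infty \ge \alpha\}$, which by hypothesis carries probability at least $p$, this forces $\sum_j g_j^2 \ge \alpha^2$. First I would use this to restrict the expectation of the squared $\ell_2$-norm to the event, dropping the nonnegative contribution off the event and lower-bounding the integrand by $\alpha^2$ on it, to obtain
\[
\mathbb{E}_{\boldsymbol\vartheta\sim\mathcal{D}}\Bigl[\textstyle\sum_j g_j^2\Bigr] \ge \alpha^2\, \Pr_{\boldsymbol\vartheta\sim\mathcal{D}}\bigl[\|\nabla\langle\hat{\mathcal{H}}\rangle_{\boldsymbol\vartheta}\|_\infty \ge \alpha\bigr] \ge p\,\alpha^2.
\]

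Expanding the left-hand side by linearity of expectation and the zero-mean hypothesis gives $\sum_j \operatorname{Var}[g_j] \ge p\alpha^2$. A pigeonhole (averaging) argument then produces a single index $j$ whose variance is at least the mean value $p\alpha^2/M = p\alpha^2/\bigl(D(m_s+m_k)\bigr)$, which is precisely the claimed bound. The concluding remark follows by substitution: if $p,\alpha = \Omega\bigl(1/\operatorname{poly}(n)\bigr)$ and $M = \operatorname{poly}(n)$, the right-hand side is itself inverse-polynomial in $n$, so the component variance cannot decay exponentially, ruling out a barren plateau in that coordinate.

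I do not expect a serious technical obstacle here; the argument is a second-moment-plus-pigeonhole inequality. The only points requiring minor care are the \emph{direction} of the norm inequality ($\|\cdot\|_2 \ge \|\cdot\|_\infty$, so that an $\ell_\infty$ lower bound transfers to the $\ell_2$ sum), and consistent use of the zero-mean assumption, which is exactly what converts the second moments back into variances; everything else is routine.
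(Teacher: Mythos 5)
Your proposal is correct and follows essentially the same route as the paper's proof: bounding $\mathbb{E}\bigl[\|\nabla\langle\hat{\mathcal{H}}\rangle_{\boldsymbol\vartheta}\|_2^2\bigr]$ from below by $p\alpha^2$ via the event $\{\|\nabla\langle\hat{\mathcal{H}}\rangle_{\boldsymbol\vartheta}\|_\infty \ge \alpha\}$, converting second moments to variances with the zero-mean hypothesis, and finishing with a pigeonhole over the $D(m_s+m_k)$ components. No gaps.
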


\begin{proof}
Since $\mathbb{E}_{\mathcal{D}}[\nabla  \langle\hat{\mathcal{H}}\rangle_{\boldsymbol\vartheta}] = \boldsymbol{0}$, we have
\begin{equation}
\operatorname{Var}_{\mathcal{D}}\left[\partial_{\vartheta_j}\langle\hat{\mathcal{H}}\rangle_{\boldsymbol\vartheta}\right]
= \mathbb{E}_{\mathcal{D}}\left[\left(\partial_{\vartheta_j}\langle\hat{\mathcal{H}}\rangle_{\boldsymbol\vartheta}\right)^2\right].
\end{equation}
Whenever $\|\nabla  \langle\hat{\mathcal{H}}\rangle_{\boldsymbol\vartheta}\|_\infty \ge \alpha$, there exists some $k$ such that $(\partial_{\vartheta_k}\langle\hat{\mathcal{H}}\rangle_{\boldsymbol\vartheta})^2 \ge \alpha^2$. This implies
\begin{equation}
\sum_{i=1}^{D(m_s + m_k)} (\partial_{\vartheta_i}\langle\hat{\mathcal{H}}\rangle_{\boldsymbol\vartheta})^2 \ge \alpha^2.
\end{equation}
Taking expectations,
\begin{equation}
\begin{aligned}
\sum_{i=1}^{D(m_s + m_k)} \operatorname{Var}_{\mathcal{D}}\left[\partial_{\vartheta_i}\langle\hat{\mathcal{H}}\rangle_{\boldsymbol\vartheta}\right] 
= \mathbb{E}_{\mathcal{D}}\left[\|\nabla  \langle\hat{\mathcal{H}}\rangle_{\boldsymbol\vartheta}\|_2^2\right]
\ge p \alpha^2,
\end{aligned}
\end{equation}
so at least one term in the sum must satisfy the lower bound claimed.
\end{proof}

For the Floquet-initialized parameter ensemble used in this work, kick angles are drawn from intervals symmetric about zero, and steady angles uniformly from $[-\pi, \pi)$. Combined with the parity properties of $\langle\hat{\mathcal{H}}\rangle_{\boldsymbol\vartheta}$, this ensures that $\mathbb{E}_{\mathcal{D}}[\nabla  \langle\hat{\mathcal{H}}\rangle_{\boldsymbol\vartheta}] = \boldsymbol{0}$, satisfying the premise of Proposition~\ref{prop:inftynorm_variance}. Therefore, the observed inverse-polynomial lower bound on $\|\nabla  \langle\hat{\mathcal{H}}\rangle_{\boldsymbol\vartheta}\|_\infty$ (see Fig.~\ref{fig:BP transition}) directly implies an inverse-polynomial lower bound on the gradient variance, reconciling our results with the standard variance-based barren plateau framework.

\section{Proof of Theorem~\ref{theorem:grad_bound}}
\label{appendix:grad_bound}

\begin{proof}
Let \(\boldsymbol{\delta} := \boldsymbol{\vartheta}_{\mathrm{th}} - \boldsymbol{\vartheta}_{\mathrm{MBL}}\), and define the linear path
\begin{equation}
\gamma(s) := (1 - s)\boldsymbol{\vartheta}_{\mathrm{MBL}} + s  \boldsymbol{\vartheta}_{\mathrm{th}}, \quad s \in [0,1].
\end{equation}
By the chain rule,
\begin{equation}
\label{eq:chain}
\frac{d}{ds} \langle \hat{\mathcal{H}} \rangle_{\gamma(s)} = \nabla_{\boldsymbol{\vartheta}} \langle \hat{\mathcal{H}} \rangle_{\gamma(s)} \cdot \boldsymbol{\delta}.
\end{equation}
Integrating both sides over \( s \in [0,1] \) and applying the triangle inequality yields
\begin{equation}
\label{eq:diff_bound}
\left| \langle \hat{\mathcal{H}} \rangle_{\boldsymbol{\vartheta}_{\mathrm{th}}} - \langle \hat{\mathcal{H}} \rangle_{\boldsymbol{\vartheta}_{\mathrm{MBL}}} \right|
\le \int_0^1 \left| \nabla_{\boldsymbol{\vartheta}} \langle \hat{\mathcal{H}} \rangle_{\gamma(s)} \cdot \boldsymbol{\delta} \right|    ds.
\end{equation}
Applying Hölder's inequality,
\begin{equation}
\left| \nabla_{\boldsymbol{\vartheta}} \langle \hat{\mathcal{H}} \rangle_{\gamma(s)} \cdot \boldsymbol{\delta} \right| \le \left\| \nabla_{\boldsymbol{\vartheta}} \langle \hat{\mathcal{H}} \rangle_{\gamma(s)} \right\|_\infty \cdot \|\boldsymbol{\delta}\|_1,
\end{equation}
we obtain
\begin{equation}
\label{eq:holder}
\left| \langle \hat{\mathcal{H}} \rangle_{\boldsymbol{\vartheta}_{\mathrm{th}}} - \langle \hat{\mathcal{H}} \rangle_{\boldsymbol{\vartheta}_{\mathrm{MBL}}} \right|
\le \|\boldsymbol{\delta}\|_1 \cdot \mathbb{E}_{s \in [0,1]}  \left\| \nabla_{\boldsymbol{\vartheta}} \langle \hat{\mathcal{H}} \rangle_{\gamma(s)} \right\|_\infty.
\end{equation}
Now we bound \( \|\boldsymbol{\delta}\|_1 \). By uniform kick scaling, each of the \( D m_k \) kick parameters is multiplied by a factor \( \lambda > 1 \), and satisfies
\begin{equation}
|\delta^{(\ell,j)}| = (\lambda - 1) |\vartheta^{(1,j)}_{\mathrm{MBL}}| \le (\lambda - 1) W,
\end{equation}
since the kick parameters in \( \boldsymbol{\vartheta}_{\mathrm{MBL}} \) lie in \([ -W, W ]\). Therefore,
\begin{equation}
\label{eq:delta_norm}
\|\boldsymbol{\delta}\|_1 \le D m_k (\lambda - 1) W.
\end{equation}
Substituting \eqref{eq:delta_norm} into \eqref{eq:holder}, and dividing both sides by \( D m_k (\lambda - 1) W \), we obtain
\begin{equation}
\mathbb{E}_{s \in [0,1]} \left\| \nabla_{\boldsymbol{\vartheta}} \langle \hat{\mathcal{H}} \rangle_{\gamma(s)} \right\|_\infty
 \gtrsim 
\frac{ \left| \langle \hat{\mathcal{H}} \rangle_{\boldsymbol{\vartheta}_{\mathrm{MBL}}} - \langle \hat{\mathcal{H}} \rangle_{\boldsymbol{\vartheta}_{\mathrm{th}}} \right| }
{ D m_k (\lambda - 1) W }.
\end{equation}
Under the assumption that \( \langle \hat{\mathcal{H}} \rangle_{\boldsymbol{\vartheta}_{\mathrm{MBL}}} = \Theta(1) \) and \( \langle \hat{\mathcal{H}} \rangle_{\boldsymbol{\vartheta}_{\mathrm{th}}} \) is exponentially small in \(n\), the numerator is \(\Theta(1)\), while the denominator grows at most polynomially in \(n\), since \( D, m_k = \operatorname{poly}(n) \). This establishes that the average gradient norm remains inverse-polynomial in \(n\), completing the proof.
\end{proof}

\section{Additional Benchmarks and Robustness Tests}\label{appendix:benchmark}
To evaluate the robustness of our approach, we analyze a more complex system characterized by the following Hamiltonian:
\begin{equation}
    \hat{\mathcal{H}} = \sum_{i<j} \frac{J \bigl[1 + \gamma(-1)^{i+j}\bigr]}{2|i-j|^{\alpha}}
\bigl(\hat \sigma_i^x\hat \sigma_j^x + \hat \sigma_i^y\hat \sigma_j^y
+ \Delta\hat \sigma_i^z\hat \sigma_j^z \bigr)
+ \sum_{i} h_i\hat \sigma_i^z.
\end{equation}
This Hamiltonian describes a one-dimensional spin-1/2 chain with long-range power-law-decaying interactions (controlled by the exponent $\alpha$), staggered couplings modulated by $\gamma$ (where $0 \leq \gamma \leq 1$; $\gamma=1$ implies couplings only at even distances, while $\gamma=0$ removes staggering), anisotropy parameter $\Delta$, and on-site magnetic fields $h_i$. For small values of $\alpha$ (e.g., $\alpha < 1.5$), the long-range interactions induce extended correlations that challenge classical tensor-network methods like density matrix renormalization group (DMRG)~\cite{White1992Density, Schollwock2011The}, as they lead to rapid entanglement growth and require exponentially large bond dimensions for accurate approximations. In contrast, larger $\alpha$ (e.g., $\alpha \geq 2$) effectively reduces to short-range behavior, where DMRG performs efficiently.

\begin{figure*}[b]
    \centering
    \includegraphics[width=16.5cm]{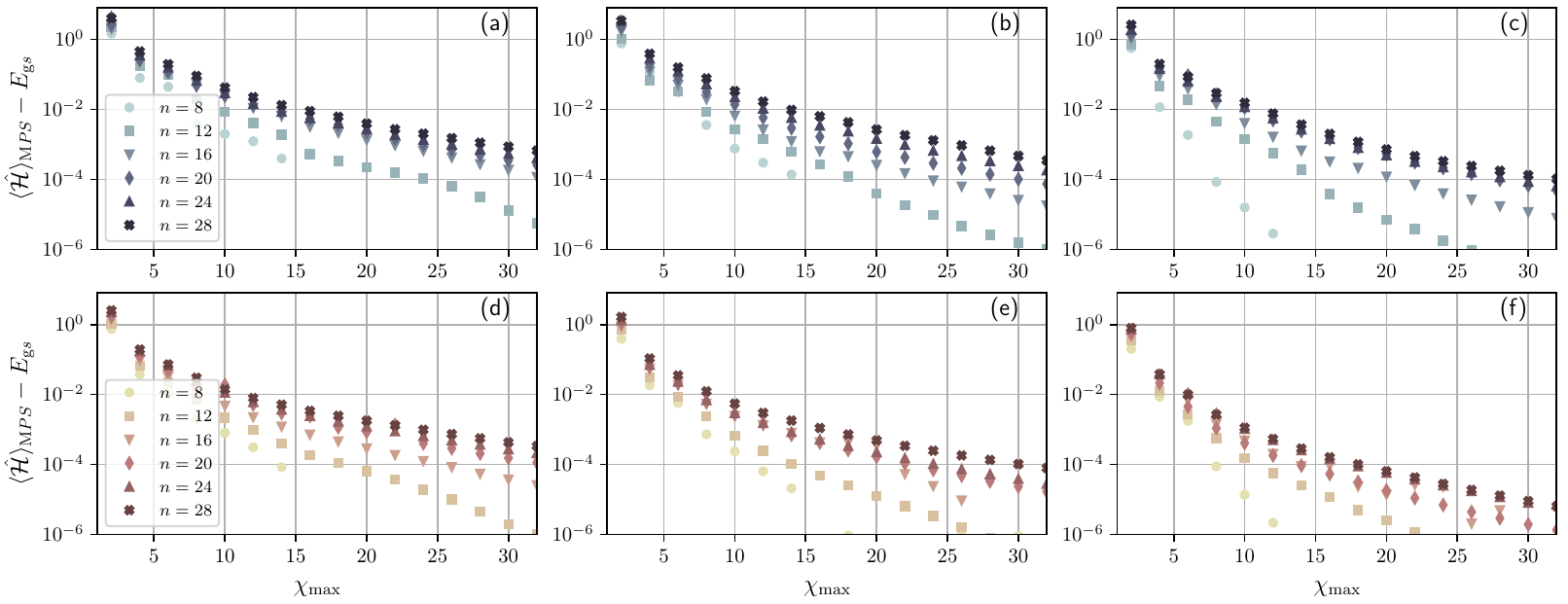}
    \caption{The absolute energy error relative to the exact ground state of MPS approximations for the long-range spin-chain Hamiltonian as a function of maximum bond dimension $\chi$. Panels (a)-(c) show results for fixed $\gamma = 3/4$, $\Delta = 2$, alternating magnetic fields $h_i = (-1)^i / 2$, while (d)-(f) correspond to sampled $\gamma \sim \text{Unif}(0.5,1.0)$, $\Delta \sim \text{Unif}(1.0,3.0)$, random fields $h_i$ uniformly sampled from $[-1,1]$ (averaged over 10 independent disorder realizations). Subpanels are organized by interaction decay exponent: (a,d) $\alpha=1.1$ (strong long-range), (b,e) $\alpha=1.5$ (intermediate), and (c,f) $\alpha=2.0$ (effectively short-range). Each curve represents a different system size $n=8,12,16,20,24,28$, with darker shades indicating larger $n$.}
    \label{fig:MPS-bonddim}
\end{figure*}

To assess the limitations of classical MPS approximations, we first optimize MPS trial states with varying maximum bond dimensions $\chi$, comparing the achieved energies to the exact ground-state energy. We fix $J = 1.0$. The results are shown in Fig.~\ref{fig:MPS-bonddim}, where panels (a)-(c) correspond to fixed $\gamma = 3/4$, $\Delta = 2$, alternating fields $h_i = (-1)^i / 2$, and panels (d)-(f) to sampled $\gamma \sim \text{Unif}(0.5,1.0)$, $\Delta \sim \text{Unif}(1.0,3.0)$, random fields with each $h_i$ independently sampled uniformly from $[-1, 1]$ (averaged over 10 independent disorder realizations). Panels (a,d) use $\alpha=1.1$, (b,e) $\alpha=1.5$, and (c,f) $\alpha=2.0$. In each subplot, the horizontal axis denotes the maximum bond dimension $\chi$, and the vertical axis shows the absolute energy error $|\langle \hat{\mathcal{H}} \rangle_{\text{MPS}} - E_{\text{gs}}|$, for system sizes $n=8,12,16,20,24,28$. As expected, larger system sizes require higher $\chi$ for achieving a target accuracy, with the effect being more pronounced at smaller $\alpha$ (e.g., $\alpha=1.1$), where long-range interactions degrade MPS performance.

Next, we compare the performance of our MBL-based initialization protocol against close-to-identity initialization, whose philosophy has been applied to achieve larger gradients~\cite{Zhang2022Escaping, Wang2024Trainability, Park2024Hamiltonian}, a common warm-start strategy in variational quantum algorithms that initializes circuit parameters near zero to preserve large gradients and proximity to the trial state. To evaluate their robustness against local minima, we employ the same hardware-efficient square circuit architecture (depth $D=n$) used throughout this work, along with an optimized MPS trial state with bond dimension $\chi=2$. For close-to-identity initialization, all circuit parameters across all layers are independently and uniformly sampled from $[-0.001, 0.001]$. For MBL initialization, we set $W=0.2$. Optimization proceeds via gradient descent until the energy fails to decrease by $10^{-4}$ over 50 consecutive iterations or reaches 1000 total iterations, at which point we record the final energy error relative to the exact ground state.

\begin{figure*}[t]
    \centering
    \includegraphics[width=13.0cm]{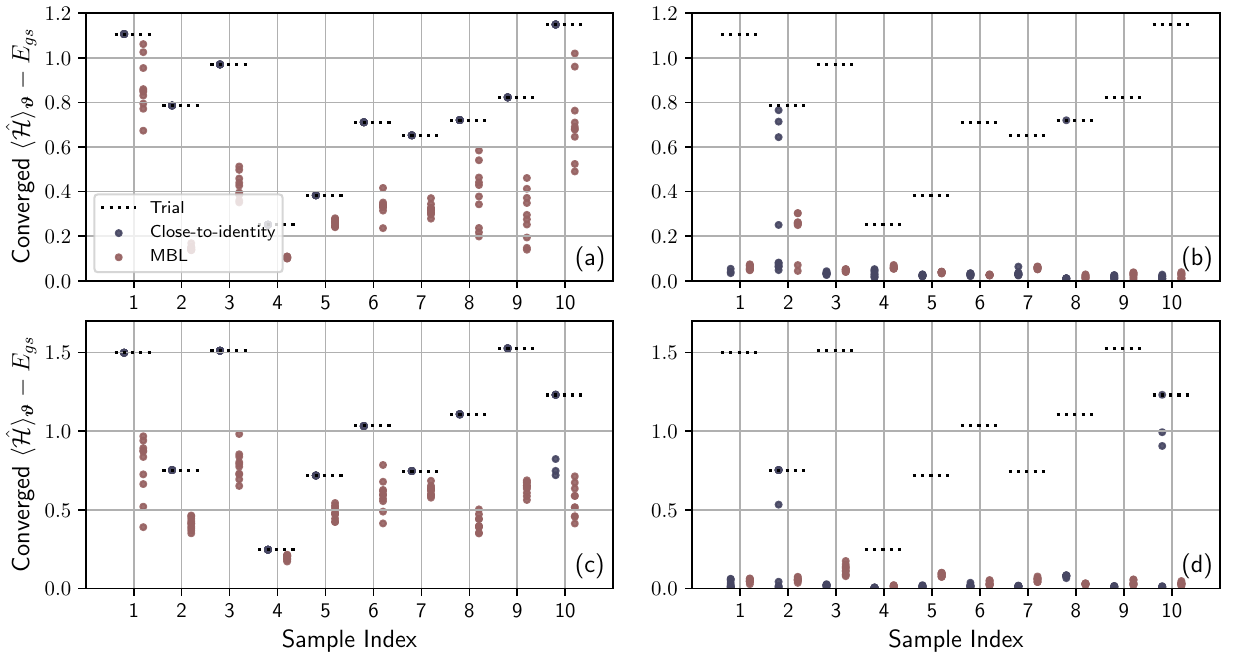}
    \caption{Comparison of converged energy errors for MBL initialization ($W=0.2$) versus close-to-identity initialization (parameters sampled from $[-0.001, 0.001]$) across 10 random Hamiltonian instances. Each bar represents the absolute energy error $|\langle \hat{\mathcal{H}} \rangle - E_{\text{gs}}|$ for a single instance, with dashed lines indicating the trial-state energy error. Panels (a) and (b) show results for $n=12$, while (c) and (d) are for $n=16$. Standard gradient descent is used in (a) and (c), and the Adam optimizer in (b) and (d). Close-to-identity initialization (blue circles) often remains trapped near the trial energy, whereas MBL initialization (red circles) more reliably reaches lower energies.}
    \label{fig:cii}
\end{figure*}

The results are presented in Fig.~\ref{fig:cii} for 10 independent random instances of the target Hamiltonian per subplot, with parameters sampled as follows: $\alpha \sim \text{Unif}(1,2)$, $\gamma \sim \text{Unif}(0.5,1.0)$, $\Delta \sim \text{Unif}(1.0,3.0)$, $J=1.0$, and on-site fields $h_i \sim \text{Unif}(-1,1)$. Panels (a) and (c) show results for $n=12$ and $n=16$ using standard gradient descent, while (b) and (d) employ the Adam optimizer~\cite{Kingma2017Adam} with hyperparameters \(\beta_1=0.9\), \(\beta_2=0.999\) and a fixed learning rate of 0.005. In panels (a) and (c), close-to-identity initialization frequently converges to the trial-state energy, indicating entrapment in local minima near the initial point. In contrast, MBL initialization often escapes these minima, achieving substantially lower energies due to its sampling of a broader parameter space while remaining in a low-entanglement regime. Panels (b) and (d) demonstrate improved performance for both methods with Adam, which better navigates the optimization landscape; however, several close-to-identity instances still fail to escape local minima, whereas MBL consistently reaches lower energies for these samples. These findings do not imply an unconditional advantage for MBL initialization but highlight scenarios where its structured exploration of parameter space facilitates escape from trial-state-induced local minima, enhancing robustness for challenging long-range models.

To investigate the sensitivity of our MBL-initialized protocol to the quality of the trial state and the impact of finite-shot noise in energy estimation, we conduct additional numerical experiments on the long-range spin-chain Hamiltonian. We consider three types of bond-dimension-2 MPS trial states: (i) an optimized MPS obtained via DMRG, (ii) a perturbed version of this optimized MPS, where its tensor parameters are interpolated towards a random MPS with a small interpolation factor $\gamma_{\text{interp}} = 0.05$ (representing mild errors in trial-state preparation), and (iii) a fully random MPS ($\gamma_{\text{interp}} = 1$), which serves as a poor-quality baseline. The interpolation is performed as $\theta_{\text{interp}} = (1 - \gamma_{\text{interp}}) \theta_{\text{opt}} + \gamma_{\text{interp}} \theta_{\text{rand}}$, where $\theta_{\text{opt}}$ and $\theta_{\text{rand}}$ are the tensor parameters of the optimized and random MPS, respectively. For each trial state, we apply the MBL-initialized variational circuit (depth $D=n=12$, $W=0.2$) and optimize using gradient descent with a step size of 0.005.

Energy expectations during optimization are estimated using finite shots, incorporating Pauli grouping~\cite{Kandala2017Hardware, Crawford2021Efficient} to reduce measurement overhead by simultaneously measuring commuting Pauli terms. Shot allocation is optimized adaptively based on the variance of each grouped observable. Specifically, for a Hamiltonian decomposed as $\hat{\mathcal{H}} = \sum_k c_k \hat{P}_k$ with $\hat{P}_k$ Pauli strings grouped into $G$ commuting sets, the total variance of the estimator is bounded by $\operatorname{Var}(\langle \hat{\mathcal H} \rangle) \leq \sum_{g=1}^G (\sum_{k \in g} |c_k| \sqrt{\operatorname{Var}(\hat{P}_k)})^2 / S_g$, where $S_g$ is the number of shots allocated to group $g$. We use 2048 shots per evaluation, distributed proportionally to the group variances to minimize the overall estimator variance, with the allocation recomputed every 10 iterations.

\begin{figure*}[t]
    \centering
    \includegraphics[width=16.5cm]{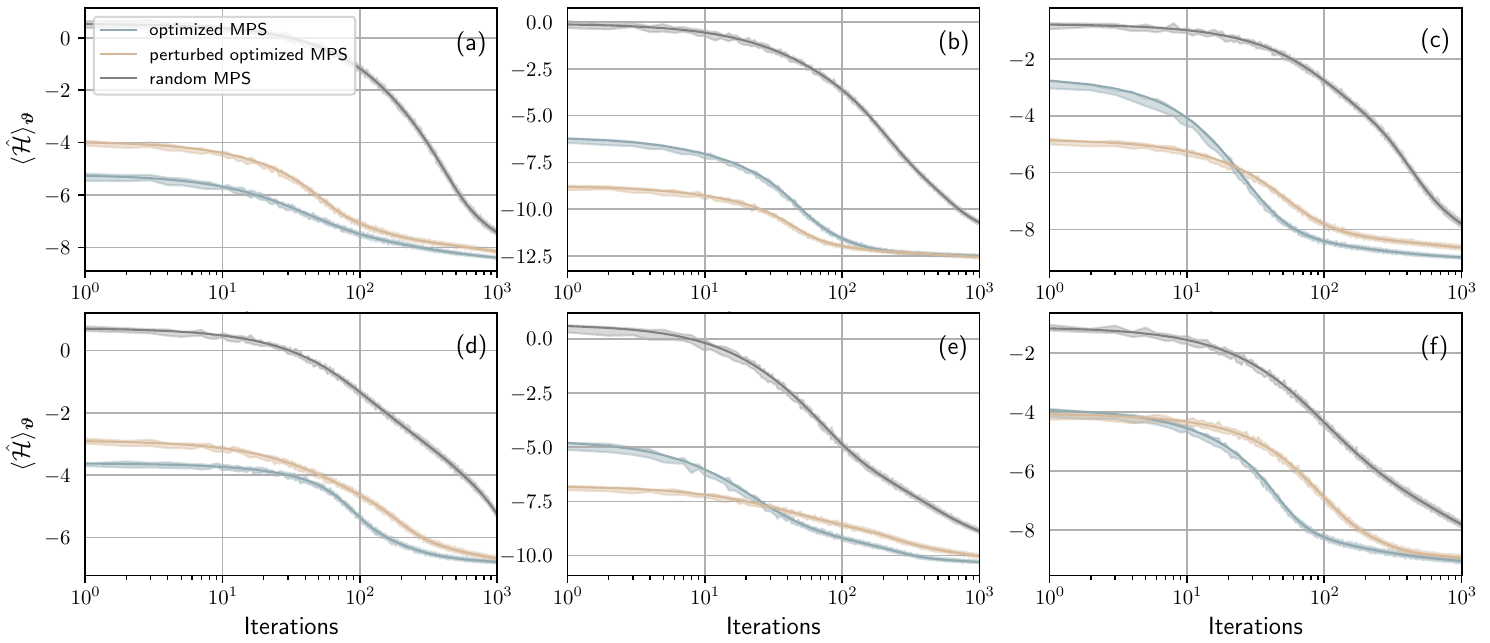}
    \caption{Optimization trajectories for the long-range spin-chain Hamiltonian ($n=12$) under MBL initialization, showing sensitivity to trial-state quality and finite-shot effects across six independent Hamiltonian instances (panels a-f). Each panel displays three curves: optimized bond-2 MPS trial (blue), perturbed MPS ($\gamma_{\text{interp}}=0.05$; brown), and random MPS ($\gamma_{\text{interp}}=1$; gray). Solid lines denote exact (infinite-shot) energies, while shaded bands represent the mean $\pm$ standard deviation from 10 finite-shot repetitions (2048 shots per evaluation, with Pauli grouping and variance-based allocation).}
    \label{fig:perturb}
\end{figure*}

The optimization trajectories are shown in Fig.~\ref{fig:perturb} for six independent Hamiltonian instances (sampled as $\alpha \sim \text{Unif}(1,2)$, $\gamma \sim \text{Unif}(0.5,1.0)$, $\Delta \sim \text{Unif}(1.0,3.0)$, $J=1.0$, $h_i \sim \text{Unif}(-1,1)$), each in a separate panel (a)-(f). In each subplot, the three curves correspond to the optimized MPS trial (blue), perturbed MPS (orange), and random MPS (gray). Solid lines represent exact (infinite-shot) energies, while shaded bands indicate the mean $\pm$ one standard deviation from 10 repeated finite-shot evaluations at each step. Mild perturbations ($\gamma_{\text{interp}}=0.05$) have minimal impact on convergence, with final energies closely matching those from the optimized trial. In contrast, fully random trials always lead to slower convergence and higher final errors, underscoring the importance of a reasonable initial approximation. Finite-shot noise introduces fluctuations, but with only 2048 shots, the estimates remain sufficiently accurate for guiding optimization, as the shot-induced variance scales as $\mathcal{O}(\|\hat{\mathcal{H}}\|^2 / S)$ (where $S$ is the total shots) and is independent of system size for Hamiltonians with bounded-norm terms. This independence arises because the estimator variance for a $k$-local Hamiltonian is bounded by $\mathcal{O}(\|\hat{\mathcal{H}}\|^2 / S)$, where $\|\hat{\mathcal{H}}\|$ grows polynomially with $n$ but can be mitigated through grouping and allocation without exponential overhead.

These results demonstrate that the MBL protocol exhibits robustness to moderate errors in the trial state, allowing for imperfect classical approximations without significant degradation. Moreover, the low shot requirements confirm practical viability on noisy intermediate-scale quantum devices, where measurement budgets are limited.

\end{appendices}

\end{document}